\documentclass[aps, prl, reprint, twocolumn, superscriptaddress,
    longbibliography, floatfix, showkeys,nofootinbib]{revtex4-2}

\usepackage{amsthm,amsmath,mathtools,latexsym,amssymb,verbatim,enumerate}

\usepackage[table,xcdraw]{xcolor}

\usepackage{color}
\usepackage{dsfont}
\usepackage{subeqnarray}
\usepackage{lettrine}

\usepackage{newpxtext,newpxmath}
\usepackage{moresize}
\usepackage{mathtools,mathrsfs,bbding}
\usepackage{microtype}
\usepackage{chngcntr}
\usepackage{siunitx}
\usepackage{braket}
\usepackage{geometry}
\geometry{top=1in, bottom=1in, left=0.75in, right=0.75in}

\usepackage{graphicx}

\usepackage{hyperref}
\usepackage{cleveref}
\usepackage{appendix}

\DeclareMathAlphabet{\mathcal}{OMS}{cmsy}{m}{n}

\usepackage[T1]{fontenc}
\usepackage[utf8]{inputenc}
\usepackage{adjustbox}

\definecolor{darkblue}{rgb}{0.0, 0.0, 0.5}
\definecolor{midnightblue}{rgb}{0.1378,0.3784,0.4838}
\definecolor{grayblue}{rgb}{0.0, 0.3, 0.5}
\definecolor{graypink}{rgb}{0.75, 0.35, 0.25}
\definecolor{darkgreen}{rgb}{0.0, 0.25, 0.0}
\definecolor{darkpurple}{rgb}{0.3, 0.0, 0.25}
\definecolor{darkorange}{rgb}{1.0, 0.549, 0.0}
\definecolor{darkred}{rgb}{0.5, 0.0, 0.0}
\definecolor{verylightgray}{rgb}{0.9, 0.9, 0.9}
\definecolor{gray75}{gray}{0.3}

\hypersetup{colorlinks=true, citecolor=grayblue, linkcolor=grayblue, urlcolor=grayblue}

%IONS

%UNIVERSAL MATHS NOTATION
\newcommand{\degree}{$^{\circ}$ }

\makeatletter
\newcommand{\shorteq}{%
  \settowidth{\@tempdima}{\,--}% Width of hyphen
  \resizebox{\@tempdima}{\height}{\,=}%
}

%VECTOR NOTATION

%PAULI OPERATORS

\newcommand{\tr}{\mathrm{Tr}\hspace{0.05cm}}
		                     									% Cool "d"

															% Just a shortcut to a spacing

\newcommand{\ii}{\mathrm{i}}
\newcommand{\ee}{\mathrm{e}}
\newcommand{\dop}{\mathrm{do}}

\newcommand{\pos}{\succcurlyeq}

%STATISTICS

\usepackage{lipsum}

\theoremstyle{plain}																		% Theorem, Lemma, Proposition and Corollary

\newtheorem{thm}{Theorem}
\newtheorem{lem}[thm]{\bf Lemma}
\newtheorem{prop}[thm]{Proposition}

\renewcommand{\thefootnote}{\fnsymbol{footnote}}

\newcommand\blfootnote[1]{%
	\begingroup
	\renewcommand\thefootnote{}\footnote{#1}%
	\addtocounter{footnote}{-1}%
	\endgroup
}

\begin{document}

\title{Device-independent quantum memory certification in two-point measurement experiments}

\author{Leonardo S. V. Santos$^\ast$}
\email{leonardo.svsantos@student.uni-siegen.de}
\affiliation{Naturwissenschaftlich-Technische Fakultät, Universität Siegen, Walter-Flex-Straße 3, 57068 Siegen, Germany}

\author{Peter Tirler$^\ast$}
\email{Peter.Tirler@uibk.ac.at}
\affiliation{Universität Innsbruck, Institut für Experimentalphysik, Technikerstraße 25, 6020 Innsbruck, Austria}

\author{Michael Meth}
\affiliation{Universität Innsbruck, Institut für Experimentalphysik, Technikerstraße 25, 6020 Innsbruck, Austria}
\author{Lukas Gerster}
\affiliation{Universität Innsbruck, Institut für Experimentalphysik, Technikerstraße 25, 6020 Innsbruck, Austria}
\author{Manuel John}
\affiliation{Universität Innsbruck, Institut für Experimentalphysik, Technikerstraße 25, 6020 Innsbruck, Austria}
\author{Keshav Pareek}
\affiliation{Universität Innsbruck, Institut für Experimentalphysik, Technikerstraße 25, 6020 Innsbruck, Austria}
\author{Tim Gollerthan}
\affiliation{Universität Innsbruck, Institut für Experimentalphysik, Technikerstraße 25, 6020 Innsbruck, Austria}

\author{Martin Ringbauer}
\email{martin.ringbauer@uibk.ac.at}
\affiliation{Universität Innsbruck, Institut für Experimentalphysik, Technikerstraße 25, 6020 Innsbruck, Austria}

\author{Otfried G\"{u}hne}
\email{otfried.guehne@uni-siegen.de}
\affiliation{Naturwissenschaftlich-Technische Fakultät, Universität Siegen, Walter-Flex-Straße 3, 57068 Siegen, Germany}
\blfootnote{$^\ast$ These authors contributed equally}

\date{\today}     

\begin{abstract}
Quantum memories are key components of emerging quantum technologies. They are designed to store quantum states and retrieve them on demand without losing features such as superposition and entanglement. Verifying that a memory preserves these features is indispensable for applications such as quantum computation, cryptography and networks, yet no general and assumption-free method has been available. Here, we present a device-independent approach for certifying black-box quantum memories, requiring no trust in any part of the experimental setup. We do so by probing quantum systems at two points in time and then confronting the observed temporal correlations against classical causal models through violations of causal inequalities. We perform a proof-of-principle experiment in a trapped-ion quantum processor, where we certify \SI{35}{\milli \second} of a qubit memory. Our method establishes temporal correlations and causal modelling as practical and powerful tool for benchmarking key ingredients of quantum technologies, such as quantum gates or implementations of algorithms.
% https://www.nature.com/documents/nature-summary-paragraph.pdf
\end{abstract}

\maketitle

\section{Introduction}
A memory is a device that reliably stores information. In the classical setting, a familiar example is a hard drive, which retains bits that can be retrieved on demand. In quantum information processing, however, the task is considerably more demanding: a memory must store quantum states, preserving not only any classical information they encode but also intrinsically quantum features such as superposition and entanglement. These properties are exquisitely fragile and easily disturbed by noise, yet they are essential for achieving quantum advantages. A quantum memory must therefore safeguard them throughout the storage time to ensure the reliable performance of quantum computation~\cite{DKLP02,Nielsen03,Terhal15}, the implementation of quantum network architectures~\cite{ABCEL21,LKZJS23}, and secure quantum cryptography~\cite{PR22}. %Bravyi2024 Kimble08,WEH18,PABBBCEGLOPRSTUVVW20,RW23

A key step towards the practical usage of quantum memories is their certification; that is, rigorously confirming their correct functioning. This is particularly pressing for quantum key distribution, where noise or loss can reduce key rates or open side channels for eavesdropping~\cite{SBCDLP09}. Entanglement-based quantum networks likewise depend on high-fidelity memories to buffer entangled states during entanglement swapping; here, noise limits the success probability and hence the scalability of long-distance entanglement distribution~\cite{AEEHJLT23}.

\begin{figure}
    \centering
    \includegraphics[width=1\linewidth]{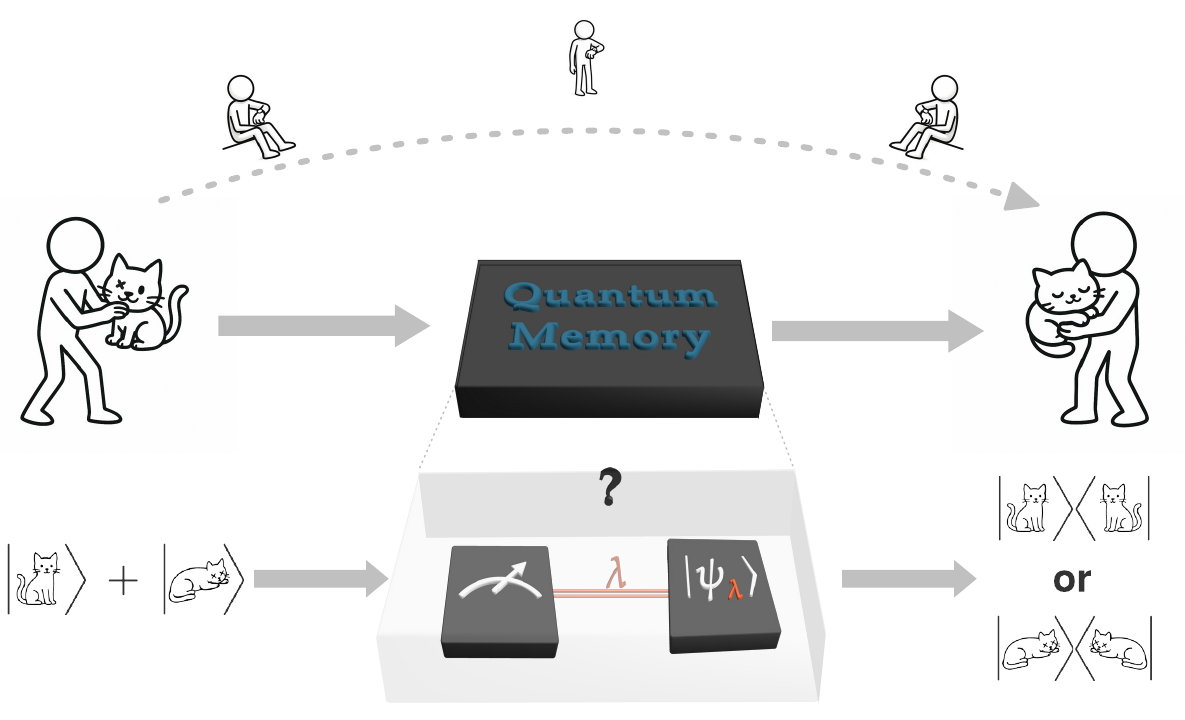}
    \caption{\sf \textbf{Quantum memory certification.} A quantum memory is expected to reliably store quantum states, yet real devices are imperfect. Noise can induce decoherence and dissipation, effectively classicalising the memory so that only classical information is transmitted. If that happens, the process is reduced to one in which the system is measured and a state is prepared based on the measurement outcome. In other words, the device no longer functions as a genuine quantum memory and must be ruled out.}
    \label{fig:catmemory}
\end{figure}

\begin{figure*}[t]\flushleft
    \includegraphics[width=1\linewidth]{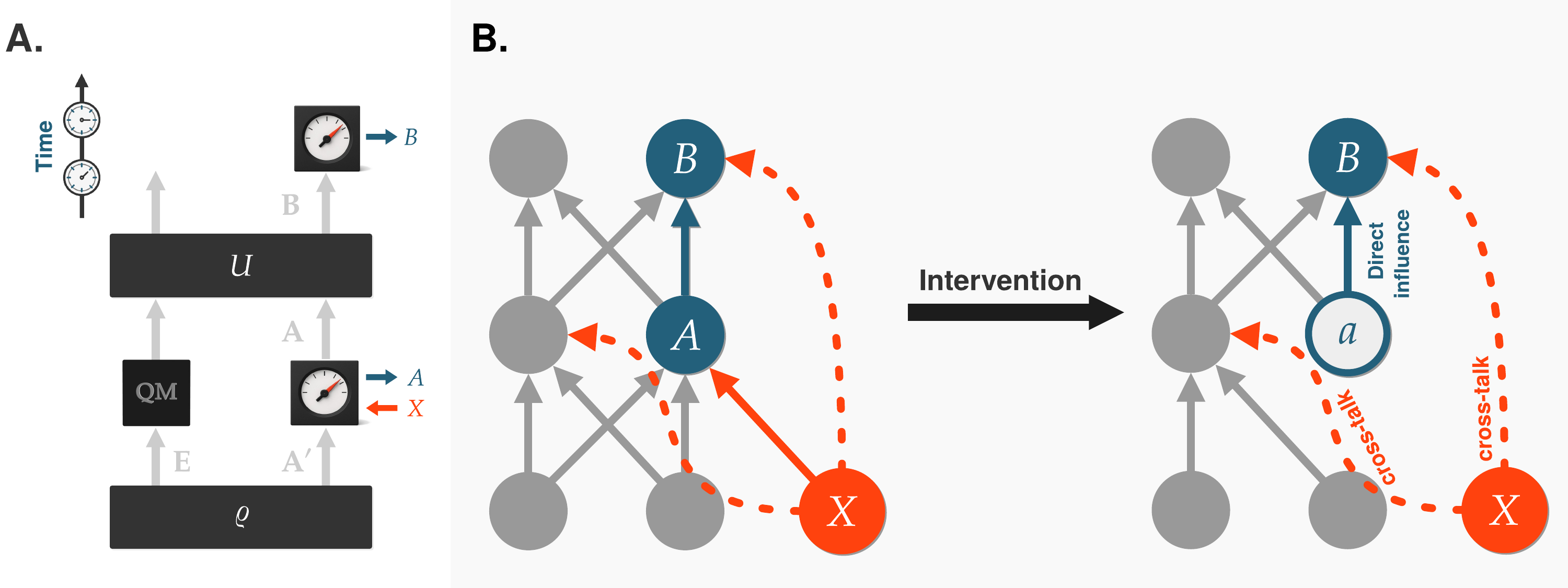}
    \caption{\sf \textbf{Two-point measurement experiments.} {\bf\textsf{A.}} To test a quantum memory (QM), we consider the scenario where quantum systems are measured at two consecutive points in time. {\bf\textsf{B.}} The observed statistics are then compared against a classical causal model in which outcomes are part of a time series. These models are represented by Bayesian networks, where the nodes correspond to the variables which are connected by arrows if causal influence is possible. Interventions make causal influences explicit and empirically accessible. Classical-quantum gaps in temporal correlations certify the quantum memory.}
    \label{fig:cover}
\end{figure*}

When it comes to certification, the device-independent (DI) framework sets the gold standard: it draws conclusions solely from the observed correlations, requiring no assumptions about a device’s internal workings or the calibration of its measurements~\cite{Bancal14}. This is particularly pertinent for memories, whose certification depends on other components, such as state-preparation sources and detectors, whose own characterisation may, in turn, rely on yet further devices. DI methods offer an elegant resolution to this problem by enabling certification that is entirely agnostic to the internal details of any specific apparatus. This is possible because, under a given causal structure, classical and quantum theories can produce incompatible predictions for observable correlations. For instance, in a Bell test, subsystems of a multipartite quantum system are distributed to spatially separated parties; a violation of a Bell inequality then certifies the underlying quantum state~\cite{KR21}. % SB20

Recent years have witnessed substantial progress in certifying quantum memories across a spectrum of trust assumptions~\cite{HL09,MR13,Pusey15,FSTGLBBAG17,RBL18,YLZRTG21}, including significant steps toward DI~\cite{SBWS18,GPBPMKRRF20,YSZBFKALZXBP21,CKZTC21,SBIAB23}. Nevertheless, most DI protocols rely on Bell tests, a scenario that is not well aligned with memory certification for at least two reasons. First, memory usage is inherently temporal: information is encoded, stored, and later retrieved, so the relevant correlations occur across time rather than space. Second, memories are local modules, operating within network nodes or as components of quantum processors, so a local mode of certification is both natural in principle and desirable in practice.

Here we solve this problem by introducing a DI certification scheme based on two-point measurement (TPM) experiments, in which quantum systems are measured at two successive times. Unlike a Bell test, this scheme is expressly designed for the inherently temporal nature of memory usage. The key idea is to confront the observed temporal correlations of measurement outcomes with those that could arise from a classical stochastic process. Specifically, introducing an instrumental variable to interrogate the process allows to access causal influences and exposes classical–quantum gaps, thereby certifying the genuine quantum nature of the memory. As a further advantage, the same framework also certifies other significant features of quantum causal mechanisms, measurement incompatibility and quantum non-Markovianity.

Our framework is directly applicable to widely available NISQ-era quantum platforms, and we demonstrate it through proof-of-principle experiments on a trapped-ion quantum processor. This platform is particularly well suited for TPM-based certification because it enables mid-circuit measurement and re-preparation. In addition, trapped-ion measurements do not suffer from data loss due to no-detection events in photodetectors, thereby avoiding the need for detection-efficiency assumptions such as fair sampling that are often required in photonics-based tests.

\section{Quantum two-point measurement experiments}

Consider a scenario in which quantum systems are probed at two points in time as shown in Fig.~\ref{fig:cover}. Initially, the system $\bf A^\prime$ may be correlated with another system $\bf E$, described by the joint state $\varrho$. A measurement is performed on $\bf A^\prime$, yielding a classical outcome $A$, and leaving behind a post-measurement system $\bf A$, which may differ from $\bf A^\prime$. This system then evolves in time, jointly with $\bf E$, under a global unitary transformation $U$. At a later time, a second measurement is performed on a system $\bf B$, yielding a classical outcome $B$. This scenario shall be used to test whether the system $\bf E$ acts as a quantum memory.

An \emph{ideal} quantum memory must faithfully preserve quantum states, which means it is described by a reversible (unitary) channel. Any degradation of the memory leads to loss of information, introducing irreversibility and driving the stored state towards classicality. Once a memory becomes fully classical, it can transmit \emph{at most} classical information. 
Formally, this corresponds to the input undergoing a measurement whose outcome is the only information retained (see Fig~\ref{fig:catmemory}); equivalently, the memory's channel becomes {entanglement-breaking}~\cite{HSR03}: it destroys any entanglement the input system may have had with other degrees of freedom.

Our goal is to certify a black-box quantum memory using only the observed statistical correlations. By certification we mean both ruling out any classical simulation and assessing its quality---namely, how far it departs from an ideal memory. We quantify this ``quality’’ using the Uhlmann fidelity between the Choi–Jamio{\l}kowski state~\cite{Choi75,J72} of the implemented memory channel and that of an ideal one, which we denote by $F_{\rm Memory}$.

For defining non-classicality, we follow the spirit of Bell's theorem~\cite{Bell64}, consider statistical correlations and ask whether they can be explained within a classical model. In our setting, the variables $A$ and $B$ represent outcomes of sequential measurements, suggesting a natural causal ordering $A\to B$; that is, we assume a forward-directed time series in which $A$ may influence $B$, but not the reverse. These variables may also interact with an additional stochastic process that is potentially inaccessible (latent), which plays a role analogous to that of \textit{hidden variables} in Bell tests. For analysing the behaviour of a system, we also consider variables that the experimenter samples independently of the systems being interrogated. Here, we focus on a single such variable, denoted $X$, which is used to determine the measurement setting at the first time step.

The causal assumptions of the model are summarised in the panel {\bf\textsf{B}} of Fig~\ref{fig:cover}. In particular, notice that $X$ may affect $B$ through a direct causal link, indirectly via its effect on $A$ or through the hidden process. We refer to causal influences of $X$ on $B$ that are \emph{not} mediated through $A$ as {\emph{cross-talk influences}}. Yet, we assume that no variable that influences $A$ or $B$ also influences $X$, an assumption that is reminiscent of the assumption of \textit{free will} or \textit{measurement independence} in Bell experiments\footnote{Note that $X$ is sampled from an external source of randomness, such as a pseudo-random number generator, human choices, or even cosmic signals from distant quasars. These sources have no reason whatsoever to be causally connected with the system under investigation.}.

To determine whether these classical models can account for quantum TPM correlations we proceed as follows. First, we adopt the framework of potential outcome models from causal inference~\cite{Rubin74} and consider an idealised scenario with no cross-talk influence. In this case, the scenario can be mapped onto the well-known {\emph{instrumental network}}~\cite{Pearl95,BP97,Bonet01,KM20}. We derive an inequality, whose violation certifies that the correlations cannot be described by such a model, implying that the quantum memory cannot be described by a classical mechanism. We then go beyond this idealised setting by exploring deviations, drawing on insights from the theory of quantum contextuality with disturbing measurements~\cite{KDL15}. Finally, we obtain testable implications in the form of Bell-like inequalities that allows us to certify whether TPM can be explained classically. 

Specifically, we consider $A$ and $B$ as binary, and $X$ as having finite alphabet $\mathcal{X}$. First, we define
\begin{align}\label{eq:km}
\Gamma_{\Pr(AB\mid X)}&=\sum_{b_0,\,b_1}\inf_{x\in \mathcal{X}}\sum_{a}\Pr(A=a,B=b_a\mid X=x).
\end{align}
Then, as shown in the \hyperlink{SM}{Supplementary Information}, for a classical
model the inequality $\Gamma_{\Pr(AB\mid X)} \geq 1$ holds.

Next, to account for cross-talk influences, we consider \emph{interventions} upon $A$ to somehow \textit{force it to assume a value} predefined \emph{a priori}~\cite{Pearl09}. This is done in order to stop the causal flux from $X$ to $B$ through the variable $A$, so that any remaining influence $X\to B$ is \emph{exclusively} due to cross-talk. We denote $B_{\dop(A=a,X=x)}$ as the variable representing the outcome of the second measurement when $A$ is set to a value $a$ and $X$ to $x$. Cross-talk influence can then be quantified in terms of the \textit{average causal direct effect} (ACDE),
\begin{align}\label{eq:ACDE}
\sup|\Pr(B_{\dop(A=a,X=x)}=b)-\Pr(B_{\dop(A=a,X=x^\prime)}=b)|,
\end{align}
the supremum being taken over all values indicated. Assuming classicality then leads to the following inequality
\begin{equation}\label{eq:kmct}
\Gamma_{\Pr(AB\mid X)}+2\mathrm{ACDE}\geq 1.
\end{equation}
This inequality depends on two different experiments (with and without intervention) so it does \emph{not} have the same status as a Bell inequality. As such, we treat cross-talk influence as a \emph{loophole} that must be avoided and addressed. Without interventions, crosstalk can be witnessed via {\emph{Pearl's inequality}}~\cite{Pearl95}, defined in terms of
\begin{equation}\label{eq:Pearl}
\Delta_{\Pr(AB\mid X)}=\max_{a\in\{0,1\}}\sum_{b}\sup_{x\in \mathcal{X}}\Pr(A=a,B=b\mid X=x),
\end{equation}
whence absence of cross-talk influence implies $\Delta_{\Pr(AB\mid X)}\leq 1$. Interestingly, Pearl's inequality is a pre-condition for absence of cross-talk influences regardless of the physical theory~\cite{HLP14}; in other words, its violation signals cross-talk rather than any form of non-classicality.

Inequality~\eqref{eq:kmct}, in turn, can be violated by quantum TPM correlations. In particular, if there is no cross-talk influence (i.e., $\mathrm{ACDE}=0$), it can be violated down to
\begin{equation}\label{eq:max-violation}
\Gamma_{\Pr(AB\mid X)}^{\,\star}=\underbrace{2-\sqrt{2}}_{\approx \, 0.586}<1,
\end{equation}
a quantum bound that is tight. Finally, the observed violation of classicality can be translated into a DI lower bound on the memory's quality. In terms of $\Gamma_{\Pr(AB\mid X)}$ [Eq.~\eqref{eq:km}], $F_{\rm Memory}$ for a qubit memory can be a lower bounded by
\begin{equation}\label{eq:fidelity}
F_{\rm Memory}\geq\frac{1}{2}\left(1-\frac{\Gamma_{\Pr(AB\mid X)}-2+S_{\rm K}}{\sqrt{2}-S_{\rm K}}\right),
\end{equation}
where $S_{\rm K}=(8+7\sqrt{2})/17$ (see \hyperlink{SM}{Supplemental Information} for details).

\begin{figure*}[ht]
    \begin{minipage}[t]{0.53\textwidth}\flushleft
    {\bf \textsf{A.}}
    \includegraphics[width=1\linewidth]{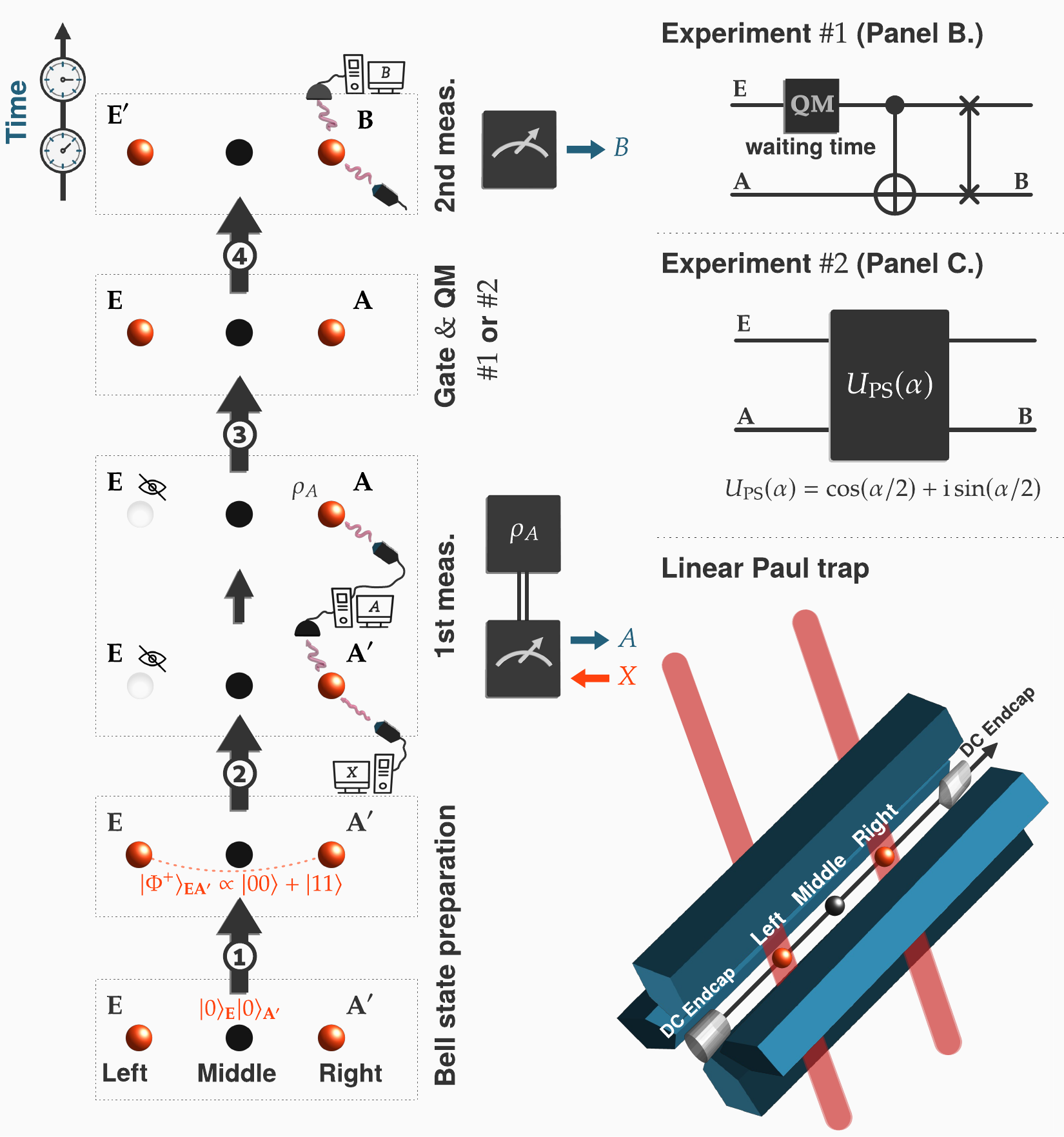}
    \vspace{0.425cm}
    \end{minipage}
    \begin{minipage}[t]{0.44\textwidth}\flushleft
    {\bf \textsf{B.}}\\
    \includegraphics[width=1\linewidth]{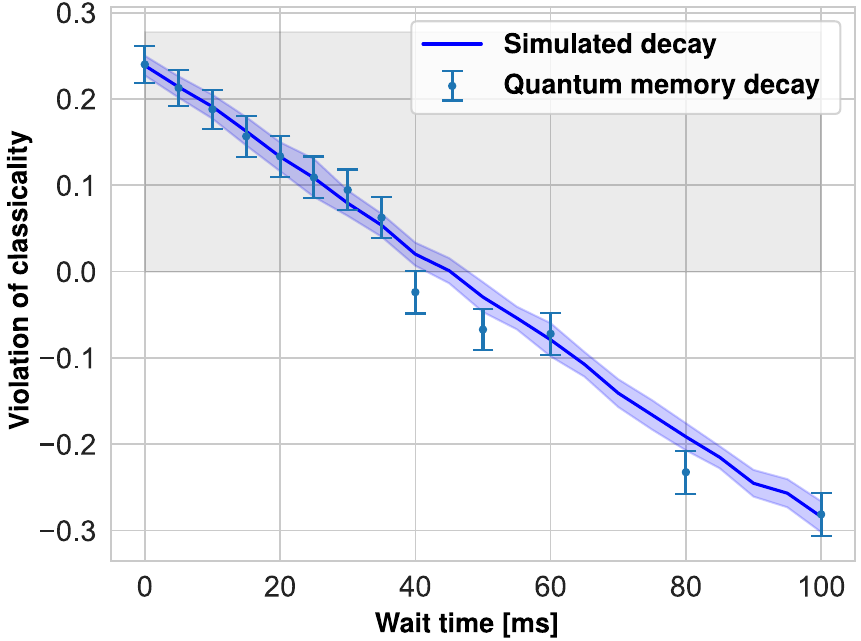}\\
    {\bf \textsf{C.}}\\ 
    \includegraphics[width=1\linewidth]{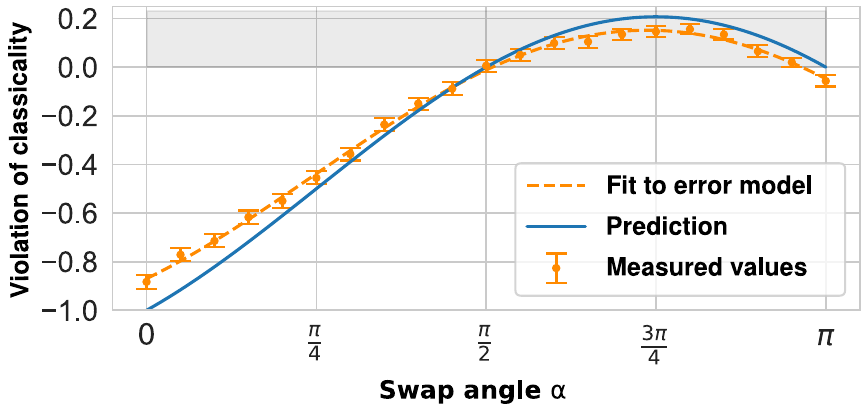}
    \end{minipage}
    \caption{\small\sf \textbf{Experiment and results.} 
    Panel {\bf \textsf{A.}} shows the implemented sequence step by step. The ions are confined in a macroscopic linear Paul trap, and we consider two experiments ($\#1$ and $\#2$). Panels {\bf \textsf{B.}} and {\bf \textsf{C.}} display the experimental results, rescaled such that positive values indicate a violation of classicality. In {\bf \textsf{B.}}, we observe the classicalisation of a quantum memory with increasing waiting time, where the dark-blue line denotes the expected behaviour from dephasing and fidelity, with the blue shaded region indicating one-$\sigma$ shot-noise uncertainty. In {\bf \textsf{C.}}, quantumness is shown versus the partial $\alpha$-\textsc{swap}; violations for $\pi/2 < \alpha < \pi$ certify quantum causal relations, excluding classical causal models. The blue curve represents the ideal prediction.}
    \label{fig:experiment}
\end{figure*}

\section{Certifying a quantum memory}

We proceed with the experimental implementation. The setup consists of a chain of three $^{40}\text{Ca}^+$ ions confined in a linear Paul trap; see Fig.~\ref{fig:experiment} and also Refs.~\cite{Schindler2013,Ringbauer2022}. The two outer ions encode the qubits $\bf A^\prime$ and $\bf E$. The middle ion is used solely for spatial separation and does not actively participate in the protocol. The pair $\bf A^\prime+\bf E$ is initialised in a Bell state via a single application of a M{\o}lmer-S{\o}rensen gate~\cite{molmer_sorensen_1999}. Experimental details and additional data are provided in the \hyperlink{SM}{Supplementary Information}

After the initial state preparation, qubit $\bf A^\prime$ is measured: the variable $X$ is sampled to select one of four possible measurements, 
$\pm \sigma_x$- and $\pm \sigma_z$- Pauli observables, the outcome of which corresponds to the event $A=a$. We implement a \emph{measure-and-prepare instrument} in which the post-measurement system $\bf A$ is prepared in a state $\rho_A$ that depends only on the observed outcome, and not on the chosen measurement setting. Specifically, for each $x$, we measure $\bf A^\prime$ with a POVM $E_{A\mid X=x}$, the result of which determines a post-measurement state $\rho_{A=a}$ for qubit $\bf A$ conditioned on the event $A=a$. It is important that this state depends only on the observed event $A=a$ and not on the sampled value $X=x$. To prevent any effect of the measurement on qubit $\bf E$, the corresponding ion is spectroscopically decoupled by shelving it in an unused energy level during the measurement. While this technique is very effective, perfect experimental isolation of the two qubits is always an idealisation; deviations from it must be addressed with the ACDE and Pearl's inequality.

After the first measurement, the pair of qubits $\bf A+E$ evolves through a unitary $U$, becoming qubits $\bf B + E^\prime$. Afterwards, qubit $\bf B$ is measured\footnote{As we treat systems at different times as \emph{potentially distinct}, qubit $\bf B$ may correspond to either the left or the right ion (cf.~Fig.~\ref{fig:experiment}), depending on the protocol.}. The unitary used to certify the quantum memory consists of a {\sc cnot} gate followed by a unitary {\sc swap}. The second measurement on qubit $\bf B$ is then given by $(\sigma_x+\sigma_z)/\sqrt{2}$. In the ideal case, this yields maximal violation of the classical inequality [Eq.~\eqref{eq:max-violation}]. Panel {\sf\textbf{A.}} of Fig.~\ref{fig:experiment} represents a schematic of the setup and the implemented sequence.

Experimentally, we calibrate the processor for optimal in-sequence re-cooling and gate fidelities, obtaining
\begin{equation}\label{eq:opt-exp-violation}
\Gamma_{\Pr(AB\mid X)}^{\rm exper}=\SI{0.642}{} \pm \SI{0.052}{},
\end{equation}
which is nearly $7\sigma$ below the classical bound of one. Moreover, Eq.~\eqref{eq:fidelity} gives a bound of $F_{\rm Memory}\gtrsim 92\%$ on the quality of the quantum memory in this process.

By adding a \emph{waiting time} after the first measurement and before the unitary, the memory shall lose information over time until it becomes completely classical and, as such, cannot violate classical inequalities. Panel {\bf\textsf{B.}} Fig.~\ref{fig:experiment} shows the degradation of a memory over time towards classicality.

Finally, we address the cross-talk loophole which can occur due to experimental imperfections. First, we look for possible violations of Pearl's inequality [Eqs.~\eqref{eq:Pearl}], which would be a witness of cross-talk influences. The very same data used to obtain Eq.~\eqref{eq:opt-exp-violation} yields
\begin{equation}
\Delta_{\Pr(AB\mid X)}=\SI{0.883}{} \pm \SI{0.023}{},
\end{equation}
showing that Pearl's inequality is not violated, a precondition for no cross-talk influences. Yet, since it does not provide a sufficient condition, we also measured the ACDE [Eq.~\eqref{eq:ACDE}] for which we obtain
\begin{equation}
\mathrm{ACDE}=\SI{0.0325}{} \pm \SI{0.0339}{}.
\end{equation}
This value is compatible with zero within the experimental precision. These address the cross-talk influence loophole\footnote{Importantly, \emph{addressing} the loophole should not be read as \emph{closing} it.}; see the \hyperlink{SM}{Supplemental Information} for additional data.

% \og{In the previous text I made some changes and shortened a bit. You see the changes in the latex, the old text is as a comment there.}

\section{Quantum Properties in Two-Point-Measurement Experiments}

Certification of quantum memory in TPM experiments inevitably hinges on quantum resources elsewhere in the setup. The situation parallels Bell tests: entanglement certification requires incompatible (non–jointly measurable) measurements, while certifying measurements relies on entangled states~\cite{QVB14}. Here, memory certification depends on quantum properties of the initial state $\varrho$ and the unitary $U$. Moreover, these properties are also of independent interest; we therefore outline them below.

In order to certify quantum memory in TPM experiments the initial state $\varrho$ has to be entangled and the measurements incompatible. In the first measurement, it is straightforward to show that, for measure-and-prepare instruments defined by a POVM assemblage $E_{A\mid X}$ and states $\rho_A$, joint measurability of $E_{A\mid X}$ implies classicality. In the second measurement however the notion of incompatibility enters more subtly. Here, the relevant assemblage defines indirect measurements on the system $\bf E$: first, prepare system $\bf A$ in states $\rho_A$, let it interact with $\bf E$ via unitary $U$, and measure $\bf B$ with the POVM $F_B$. This procedure implements the POVM assemblage
\begin{equation}\label{eq:GBA}
G_{B\mid A}=\tr_{\bf A}[U^\dagger(\mathrm{id}_{\bf E^\prime}\otimes F_B)U(\mathrm{id}_{\bf E}\otimes\rho_A)].
\end{equation}
Clearly, $G_{B\mid A}\pos 0$ and $\sum_b G_{B=b\mid A}=\mathrm{id}_{\bf E}$; that is, $G_{B\mid A}$ defines a POVM assemblage on $\bf E$. We show in the \hyperlink{SM}{Supplemental Information} that non-classicality requires the interaction $U$ to \emph{generate} measurement incompatibility; in other words, $G_{B\mid A}$ must not be jointly measurable.

Another property that is certified in the present experiments is \emph{non-classical causality}, here understood as the impossibility of simulating causal processes using classical combinations of common causes and direct causes (see Refs.~\cite{MRSR17,NQGMV21} and references therein). First, as the name suggests, in a common-cause (CC) process, all observed correlations are explainable in terms of a shared common cause. Mathematically, the process is defined by a shared quantum state $\rho_{\bf A^\prime B}$ and not only no information flows from system $\bf B$ to system $\bf A$ (i.e., no retrocausality) but also not in the other direction. In contrast, in DC process all correlations come from direct causal influence. Then, the process is defined by a quantum channel $\mathcal{N}_{\bf A\to B}$. Processes can also be \emph{classical combinations} of these causal mechanisms, i.e., statistical mixtures of CC and DC processes. Every classical process admits such a decomposition, but not all quantum processes do---those that do not are non-classically causal. This property can be certified by violations of the bound in Eq.~\eqref{eq:kmct}. In the panel {\bf\textsf{C}.} of Fig.~\ref{fig:experiment} we plot the observed violations for a process that realises a \emph{coherent mixture of CC and DC}, generated by the two-qubit $\alpha$-\textsc{swap} gate
\begin{equation}
U_{\rm PS}(\alpha)=\cos\left(\frac{\alpha}{2}\right)\mathrm{id}+\ii\sin\left(\frac{\alpha}{2}\right)U_{\textsc{swap}}.
\end{equation}
This generalises the implementation of Ref.~\cite{MRSR17}, recovered at $\alpha=\pi/2$. Here, $\alpha=0$ is a DC process, where $\mathcal{N}_{\bf A\to B}$ is the identity channel; in contrast, $\alpha=\pi$ corresponds to a CC process where $\rho_{\bf A^\prime B}$ is a Bell state. Intermediate values of $\alpha$ implement processes that are not CC, DC, or classical combinations thereof. However, violations appear only for $\pi/2<\alpha<\pi$; outside this range the test is inconclusive since $U_{\rm PS}(\alpha)$ does not generate incompatible assemblages through Eq.~\eqref{eq:GBA} for $\alpha=\pi$ or $0\leq \alpha\leq \pi/2$.

\section{Discussion}

Our work puts forward a DI framework for certifying quantum properties in TPM experiments. The results establish temporal correlations and causal modelling as practical tools for benchmarking state-of-the-art quantum technologies with real-time control via mid-circuit measurements and adaptive operations during computation. The certification occurs under realistic noise and control conditions, does not rely on the devices specific models, details, or calibration, and is versatile in the sense that it can be applied to a variety of existing quantum platforms.

One property that we experimentally certify via temporal correlations is quantum memory. We implement a protocol on a trapped-ion quantum processor that certifies its quantumness and reveals its decay toward classicality. On the one hand, these results show that our methods can be used to establish trust in a quantum memory without requiring a reference memory for benchmarking, since no such reliable device currently exists. On the other hand, the observed agreement between the decay of violations and a simple error model provides evidence that the model captures the relevant noise mechanisms and can therefore serve as a trustworthy witness of classicalisation. Finally, the approach readily extends to other platforms, including photonic interfaces~\cite{Afzelius2015} and recently developed optomechanical memories~\cite{Bozkurt2025}.

This work offers concrete results and opens new directions for exploring dynamical features of quantum information processing from a foundational standpoint. It points to the possibility of benchmarking quantum operations beyond entanglement-breaking and of certifying long-range correlations required to generate entanglement between distant nodes in a network, as well as in algorithms that repeatedly access quantum memories during computation. More broadly, it demonstrates the potential of temporal correlations and causal models as operational tools for connecting foundational insights with the demands of emerging quantum technologies.

\section{Acknowledgments}

We thank Francesco Buscemi, Ties-A. Ohst, Marco T\'ulio Quintino, Roope Uola and Zhen-Peng Xu for discussions.

This research was funded by the European Research Council (ERC, QUDITS, 101039522). Views and opinions expressed are however those of the author(s) only and do not necessarily reflect those of the European Union or the European Research Council Executive Agency. Neither the European Union nor the granting authority can be held responsible for them. We also acknowledge support by the Austrian Science Fund (FWF) through the EU-QUANTERA project TNiSQ (N-6001), by the Austrian Federal Ministry of Education, Science and Research via the Austrian Research Promotion Agency (FFG) through the projects FO999914030 (MUSIQ) and FO999921407 (HDcode) funded by the European Union-NextGenerationEU, by the IQI GmbH, by the Deutsche Forschungsgemeinschaft (DFG, German Research Foundation, project numbers 447948357, 440958198, and 563437167), the Sino-German Center for Research Promotion (Project M-0294), and the German Federal Ministry of Research, Technology and Space (Project QuKuK, Grant No.~16KIS1618K and Project BeRyQC, Grant No.~13N17292). LSVS acknowledges support from the House of Young Talents of the University of Siegen.

\bibliography{bibliography.bib}

@book{Pearl09, 
title={Causality: Models, Reasoning, and Inference}, 
ISBN={9780521749190}, url={http://dx.doi.org/10.1017/CBO9780511803161}, 
DOI={10.1017/cbo9780511803161}, 
publisher={Cambridge University Press}, 
author={Pearl, Judea}, 
year={2009}, 
month=sep }

@article{KM20, 
title={Generalized instrumental inequalities: testing the instrumental variable independence assumption}, 
volume={107}, 
ISSN={1464-3510}, 
url={http://dx.doi.org/10.1093/biomet/asaa003}, 
DOI={10.1093/biomet/asaa003}, 
number={3}, 
journal={Biometrika}, 
publisher={Oxford University Press (OUP)}, 
author={Kédagni, Désiré and Mourifié, Ismael}, 
year={2020},
month=feb, 
pages={661–675} }

@inproceedings{Pearl95,
author = {Pearl, Judea},
title = {On the testability of causal models with latent and instrumental variables},
year = {1995},
isbn = {1558603859},
publisher = {Morgan Kaufmann Publishers Inc.},
address = {San Francisco, CA, USA},
abstract = {Certain causal models involving unmeasured variables induce no independence constraints among the observed variables but imply, nevertheless, inequality constraints on the observed distribution. This paper derives a general formula for such inequality constraints as induced by instrumental variables, that is, exogenous variables that directly affect some variables but not all. With the help of this formula, it is possible to test whether a model involving instrumental variables may account for the data, or, conversely, whether a given variable can be deemed instrumental.},
booktitle = {Proceedings of the Eleventh Conference on Uncertainty in Artificial Intelligence},
pages = {435–443},
numpages = {9},
keywords = {structural models, instrumental variables, graphical models, causal modeling},
location = {Montr\'{e}al, Qu\'{e}, Canada},
series = {UAI'95},
DOI = {10.5555/2074158.2074208}
}

@inproceedings{Bonet01,
author = {Bonet, Blai},
title = {Instrumentality tests revisited},
year = {2001},
isbn = {1558608001},
publisher = {Morgan Kaufmann Publishers Inc.},
address = {San Francisco, CA, USA},
abstract = {An instrument is a random variable that is uncorrelated with certain (unobserved) error terms and, thus, allows the identification of structural parameters in linear models. In nonlinear models, instrumental variables are useful for deriving bounds on causal effects. Few years ago, Pearl introduced a necessary test for instruments which permits researchers to identify variables that could not serve as instruments. In this paper, we extend Pearl's result in several directions. In particular, we answer in the affirmative an open conjecture about the non-testability of instruments in models with unrestricted variables, and we devise new tests for models with discrete and continuous variables.},
booktitle = {Proceedings of the Seventeenth Conference on Uncertainty in Artificial Intelligence},
pages = {48–55},
numpages = {8},
location = {Seattle, Washington},
series = {UAI'01},
DOI = {10.5555/2074022.2074029}
}

@article{BP97, 
title={Bounds on Treatment Effects from Studies with Imperfect Compliance}, 
volume={92}, 
ISSN={1537-274X}, 
url={http://dx.doi.org/10.1080/01621459.1997.10474074}, 
DOI={10.1080/01621459.1997.10474074}, 
number={439}, 
journal={Journal of the American Statistical Association}, 
publisher={Informa UK Limited}, 
author={Balke, Alexander and Pearl, Judea}, 
year={1997}, 
month=sep, 
pages={1171–1176}}

@article{Rubin74, 
title={Estimating causal effects of treatments in randomized and nonrandomized studies.}, 
volume={66}, 
ISSN={0022-0663}, 
url={http://dx.doi.org/10.1037/h0037350}, 
DOI={10.1037/h0037350}, 
number={5}, 
journal={Journal of Educational Psychology}, 
publisher={American Psychological Association (APA)}, 
author={Rubin, Donald B.}, 
year={1974}, 
month=oct, 
pages={688–701} }

@article{HLP14, 
title={Theory-independent limits on correlations from generalized Bayesian networks}, 
volume={16}, 
ISSN={1367-2630}, 
url={http://dx.doi.org/10.1088/1367-2630/16/11/113043}, 
DOI={10.1088/1367-2630/16/11/113043}, 
number={11}, 
journal={New Journal of Physics}, 
publisher={IOP Publishing}, 
author={Henson, Joe and Lal, Raymond and Pusey, Matthew F}, 
year={2014}, 
month=nov, 
pages={113043} }

@article{DKLP02, 
title={Topological quantum memory}, 
volume={43}, 
ISSN={1089-7658}, 
url={http://dx.doi.org/10.1063/1.1499754}, 
DOI={10.1063/1.1499754}, 
number={9}, 
journal={Journal of Mathematical Physics}, 
publisher={AIP Publishing}, 
author={Dennis, Eric and Kitaev, Alexei and Landahl, Andrew and Preskill, John}, 
year={2002}, 
month=sep, 
pages={4452–4505} }

@article{Nielsen03, 
title={Quantum computation by measurement and quantum memory}, 
volume={308}, 
ISSN={0375-9601}, 
url={http://dx.doi.org/10.1016/S0375-9601(02)01803-0}, 
DOI={10.1016/s0375-9601(02)01803-0}, 
number={2–3}, 
journal={Physics Letters A}, 
publisher={Elsevier BV}, 
author={Nielsen, Michael A.}, 
year={2003}, 
month=feb, 
pages={96–100} }

@article{Terhal15, 
title={Quantum error correction for quantum memories}, 
volume={87}, 
ISSN={1539-0756}, 
url={http://dx.doi.org/10.1103/RevModPhys.87.307}, 
DOI={10.1103/revmodphys.87.307}, 
number={2}, 
journal={Reviews of Modern Physics}, 
publisher={American Physical Society (APS)}, 
author={Terhal, Barbara M.}, 
year={2015}, 
month=apr, 
pages={307–346} }

@article{ABCEL21, 
title={Tools for quantum network design}, 
volume={3}, 
ISSN={2639-0213}, 
url={http://dx.doi.org/10.1116/5.0024062}, 
DOI={10.1116/5.0024062}, 
number={1}, 
journal={AVS Quantum Science}, 
publisher={American Vacuum Society}, 
author={Azuma, Koji and Bäuml, Stefan and Coopmans, Tim and Elkouss, David and Li, Boxi}, 
year={2021},
month=feb,
pages={014101}}

@article{LKZJS23, 
title={Quantum optical memory for entanglement distribution}, 
volume={10}, 
ISSN={2334-2536}, 
url={http://dx.doi.org/10.1364/OPTICA.493732}, 
DOI={10.1364/optica.493732}, 
number={11}, 
journal={Optica}, 
publisher={Optica Publishing Group}, 
author={Lei, Yisheng and Kimiaee Asadi, Faezeh and Zhong, Tian and Kuzmich, Alex and Simon, Christoph and Hosseini, Mahdi}, 
year={2023}, 
month=nov, 
pages={1511} }

@article{AEEHJLT23, 
title={Quantum repeaters: From quantum networks to the quantum internet}, 
volume={95}, 
ISSN={1539-0756}, 
url={http://dx.doi.org/10.1103/RevModPhys.95.045006}, 
DOI={10.1103/revmodphys.95.045006}, 
number={4}, 
journal={Reviews of Modern Physics}, 
publisher={American Physical Society (APS)}, 
author={Azuma, Koji and Economou, Sophia E. and Elkouss, David and Hilaire, Paul and Jiang, Liang and Lo, Hoi-Kwong and Tzitrin, Ilan}, 
year={2023},
pages={045006},
month=dec }

@article{PR22, 
title={Security in quantum cryptography}, 
volume={94}, 
ISSN={1539-0756}, 
url={http://dx.doi.org/10.1103/RevModPhys.94.025008}, 
DOI={10.1103/revmodphys.94.025008}, 
number={2}, 
journal={Reviews of Modern Physics}, 
publisher={American Physical Society (APS)}, 
author={Portmann, Christopher and Renner, Renato}, 
year={2022}, 
month=jun, 
pages={025008}}

@article{SBCDLP09, 
title={The security of practical quantum key distribution}, 
volume={81}, 
ISSN={1539-0756}, 
url={http://dx.doi.org/10.1103/RevModPhys.81.1301}, 
DOI={10.1103/revmodphys.81.1301}, 
number={3}, 
journal={Reviews of Modern Physics}, 
publisher={American Physical Society (APS)}, 
author={Scarani, Valerio and Bechmann-Pasquinucci, Helle and Cerf, Nicolas J. and Dušek, Miloslav and Lütkenhaus, Norbert and Peev, Momtchil}, 
year={2009}, 
month=sep, 
pages={1301–1350} }

@book{Bancal14, 
title={On the Device-Independent Approach to Quantum Physics: Advances in Quantum Nonlocality and Multipartite Entanglement Detection}, 
ISBN={9783319011837}, 
ISSN={2190-5061}, 
url={http://dx.doi.org/10.1007/978-3-319-01183-7}, 
DOI={10.1007/978-3-319-01183-7}, 
journal={Springer Theses}, 
publisher={Springer International Publishing}, 
author={Bancal, Jean-Daniel}, 
year={2014} }

@article{KR21, 
title={Theory of Quantum System Certification}, 
volume={2}, 
ISSN={2691-3399}, 
url={http://dx.doi.org/10.1103/PRXQuantum.2.010201}, 
DOI={10.1103/prxquantum.2.010201}, 
number={1}, 
journal={PRX Quantum}, 
publisher={American Physical Society (APS)}, 
author={Kliesch, Martin and Roth, Ingo}, 
year={2021}, 
month=jan,
pages={010201}}

@article{HL09, 
title={Probing the quantumness of channels with mixed states}, 
volume={80}, 
ISSN={1094-1622}, 
url={http://dx.doi.org/10.1103/PhysRevA.80.042304}, 
DOI={10.1103/physreva.80.042304}, 
number={4}, 
journal={Physical Review A}, 
publisher={American Physical Society (APS)}, 
author={Häseler, Hauke and Lütkenhaus, Norbert}, 
year={2009}, 
month=oct,
pages={042304}}

@article{MR13, 
title={Quantum channel detection}, 
volume={88}, 
ISSN={1094-1622}, 
url={http://dx.doi.org/10.1103/PhysRevA.88.042335}, 
DOI={10.1103/physreva.88.042335}, 
number={4}, 
journal={Physical Review A}, 
publisher={American Physical Society (APS)}, 
author={Macchiavello, C. and Rossi, M.}, 
year={2013}, 
month=oct,
pages={042335}}

@article{Pusey15, 
title={Verifying the quantumness of a channel with an untrusted device}, 
volume={32}, 
ISSN={1520-8540}, 
url={http://dx.doi.org/10.1364/JOSAB.32.000A56}, 
DOI={10.1364/josab.32.000a56}, 
number={4}, 
journal={Journal of the Optical Society of America B}, 
publisher={Optica Publishing Group}, 
author={Pusey, Matthew F.}, 
year={2015}, 
month=mar, 
pages={A56} }

@article{FSTGLBBAG17, 
title={Experimental certification of millions of genuinely entangled atoms in a solid}, 
volume={8}, 
ISSN={2041-1723}, 
url={http://dx.doi.org/10.1038/s41467-017-00898-6}, 
DOI={10.1038/s41467-017-00898-6}, 
number={1}, 
journal={Nature Communications}, 
publisher={Springer Science and Business Media LLC}, 
author={Fröwis, Florian and Strassmann, Peter C. and Tiranov, Alexey and Gut, Corentin and Lavoie, Jonathan and Brunner, Nicolas and Bussières, Félix and Afzelius, Mikael and Gisin, Nicolas}, year={2017}, 
month=oct,
pages={907}}

@article{RBL18, 
title={Resource Theory of Quantum Memories and Their Faithful Verification with Minimal Assumptions}, volume={8}, 
ISSN={2160-3308}, 
url={http://dx.doi.org/10.1103/PhysRevX.8.021033}, 
DOI={10.1103/physrevx.8.021033}, 
number={2}, 
journal={Physical Review X}, 
publisher={American Physical Society (APS)}, 
author={Rosset, Denis and Buscemi, Francesco and Liang, Yeong-Cherng}, 
year={2018}, 
month=may,
pages={021033}}

@article{YLZRTG21, 
title={Universal and operational benchmarking of quantum memories}, 
volume={7}, 
ISSN={2056-6387}, 
url={http://dx.doi.org/10.1038/s41534-021-00444-9}, 
DOI={10.1038/s41534-021-00444-9}, 
number={1}, 
journal={npj Quantum Information}, 
publisher={Springer Science and Business Media LLC}, 
author={Yuan, Xiao and Liu, Yunchao and Zhao, Qi and Regula, Bartosz and Thompson, Jayne and Gu, Mile}, year={2021}, 
month=jul,
pages={108}}

@article{SBWS18, 
title={Certifying the Building Blocks of Quantum Computers from Bell’s Theorem}, 
volume={121}, 
ISSN={1079-7114}, 
url={http://dx.doi.org/10.1103/PhysRevLett.121.180505}, 
DOI={10.1103/physrevlett.121.180505}, 
number={18}, 
journal={Physical Review Letters}, 
publisher={American Physical Society (APS)}, 
author={Sekatski, Pavel and Bancal, Jean-Daniel and Wagner, Sebastian and Sangouard, Nicolas}, year={2018}, 
month=nov,
pages={180505}}

@article{GPBPMKRRF20, 
title={Measurement-Device-Independent Verification of Quantum Channels}, 
volume={124}, 
ISSN={1079-7114}, 
url={http://dx.doi.org/10.1103/PhysRevLett.124.010503}, 
DOI={10.1103/physrevlett.124.010503}, 
number={1}, 
journal={Physical Review Letters}, 
publisher={American Physical Society (APS)}, 
author={Graffitti, Francesco and Pickston, Alexander and Barrow, Peter and Proietti, Massimiliano and Kundys, Dmytro and Rosset, Denis and Ringbauer, Martin and Fedrizzi, Alessandro}, 
year={2020}, 
month=jan,
pages={010503}}

@article{YSZBFKALZXBP21, 
title={Measurement-Device-Independent Verification of a Quantum Memory}, 
volume={127}, 
ISSN={1079-7114}, 
url={http://dx.doi.org/10.1103/PhysRevLett.127.160502}, 
DOI={10.1103/physrevlett.127.160502}, 
number={16}, 
journal={Physical Review Letters}, 
publisher={American Physical Society (APS)}, 
author={Yu, Yong and Sun, Peng-Fei and Zhang, Yu-Zhe and Bai, Bing and Fang, Yu-Qiang and Luo, Xi-Yu and An, Zi-Ye and Li, Jun and Zhang, Jun and Xu, Feihu and Bao, Xiao-Hui and Pan, Jian-Wei}, 
year={2021}, 
month=oct,
pages={160502}}

@article{CKZTC21,
title={Robust self-testing of steerable quantum assemblages and its applications on device-independent quantum certification}, 
volume={5}, 
ISSN={2521-327X}, 
url={http://dx.doi.org/10.22331/q-2021-09-28-552}, 
DOI={10.22331/q-2021-09-28-552}, 
journal={Quantum}, 
publisher={Verein zur Forderung des Open Access Publizierens in den Quantenwissenschaften}, author={Chen, Shin-Liang and Ku, Huan-Yu and Zhou, Wenbin and Tura, Jordi and Chen, Yueh-Nan}, year={2021}, 
month=sep, 
pages={552} }

@article{SBIAB23, 
title={Toward the Device-Independent Certification of a Quantum Memory}, 
volume={131}, 
ISSN={1079-7114}, 
url={http://dx.doi.org/10.1103/PhysRevLett.131.170802}, 
DOI={10.1103/physrevlett.131.170802}, 
number={17}, 
journal={Physical Review Letters}, 
publisher={American Physical Society (APS)}, 
author={Sekatski, Pavel and Bancal, Jean-Daniel and Ioannou, Marie and Afzelius, Mikael and Brunner, Nicolas}, 
year={2023}, 
month=oct,
pages={170802}}

@article{Ringbauer2022,
   author = {Martin Ringbauer and Michael Meth and Lukas Postler and Roman Stricker and Rainer Blatt and Philipp Schindler and Thomas Monz},
   doi = {10.1038/s41567-022-01658-0},
   issn = {1745-2481},
   issue = {9},
   journal = {Nature Physics},
   pages = {1053-1057},
   title = {A universal qudit quantum processor with trapped ions},
   volume = {18},
   url = {https://doi.org/10.1038/s41567-022-01658-0},
   year = {2022}
}

@article{Schindler2013, title={A quantum information processor with trapped ions}, volume={15}, ISSN={1367-2630}, url={http://dx.doi.org/10.1088/1367-2630/15/12/123012}, DOI={10.1088/1367-2630/15/12/123012}, number={12}, journal={New Journal of Physics}, publisher={IOP Publishing}, author={Schindler, Philipp and Nigg, Daniel and Monz, Thomas and Barreiro, Julio T and Martinez, Esteban and Wang, Shannon X and Quint, Stephan and Brandl, Matthias F and Nebendahl, Volckmar and Roos, Christian F and Chwalla, Michael and Hennrich, Markus and Blatt, Rainer}, year={2013}, month=dec, pages={123012} }

@article{freund_xcorr,
    author = {Freund, R. and Marciniak, Ch. D. and Monz, T.},
    title = {A self-referenced optical phase noise analyzer for quantum technologies},
    journal = {Review of Scientific Instruments},
    volume = {95},
    number = {6},
    pages = {063005},
    year = {2024},
    month = {06},
    issn = {0034-6748},
    doi = {10.1063/5.0183502},
    url = {https://doi.org/10.1063/5.0183502},
}

@article{molmer_sorensen_1999,
  title = {Quantum Computation with Ions in Thermal Motion},
  author = {S\o{}rensen, Anders and M\o{}lmer, Klaus},
  journal = {Phys. Rev. Lett.},
  volume = {82},
  issue = {9},
  pages = {1971--1974},
  numpages = {0},
  year = {1999},
  month = {Mar},
  publisher = {American Physical Society},
  doi = {10.1103/PhysRevLett.82.1971},
  url = {https://link.aps.org/doi/10.1103/PhysRevLett.82.1971}
}

@article{Kaniewski16, 
title={Analytic and Nearly Optimal Self-Testing Bounds for the Clauser-Horne-Shimony-Holt and Mermin Inequalities}, 
volume={117}, 
ISSN={1079-7114}, 
url={http://dx.doi.org/10.1103/PhysRevLett.117.070402}, 
DOI={10.1103/physrevlett.117.070402}, 
number={7}, 
journal={Physical Review Letters}, 
publisher={American Physical Society (APS)}, 
author={Kaniewski, Jędrzej}, 
year={2016}, 
pages={070402},
month=aug }

@article{Bell64,
	url = {https://doi.org/10.1103/physicsphysiquefizika.1.195},
	year = 1964,
	month = {nov},
	publisher = {American Physical Society ({APS})},
	volume = {1},
	number = {3},
	pages = {195--200},
	author = {J. S. Bell},
	title = {On the Einstein Podolsky Rosen paradox},
	journal = {Physics Physique}
}

@article{Choi75,
	url = {https://doi.org/10.1016/0024-3795\%2875\%2990075-0},
	year = 1975,
	month = {jun},
	publisher = {Elsevier {BV}},
	volume = {10},
	number = {3},
	pages = {285--290},
	author = {Man-Duen Choi},
	title = {Completely positive linear maps on complex matrices},
	journal = {Linear Algebra and its Applications}
}

@article{J72,
	url = {https://doi.org/10.1016/0034-4877\%2872\%2990011-0},
	year = 1972,
	month = {dec},
	publisher = {Elsevier {BV}},
	volume = {3},
	number = {4},
	pages = {275--278},
	author = {A. Jamio{\l}kowski},
	title = {Linear transformations which preserve trace and positive semidefiniteness of operators},
	journal = {Reports on Mathematical Physics}
}

@article{HSR03, 
title={Entanglement Breaking Channels}, 
volume={15}, 
ISSN={1793-6659}, 
url={http://dx.doi.org/10.1142/S0129055X03001709}, 
DOI={10.1142/s0129055x03001709},
number={06}, journal={Reviews in Mathematical Physics}, 
publisher={World Scientific Pub Co Pte Lt}, 
author={Horodecki, Michael and Shor, Peter W. and Ruskai, Mary Beth}, 
year={2003}, 
month=aug, 
pages={629–641} }

@article{WMGWF15, 
title={Characterizing Quantum Dynamics with Initial System-Environment Correlations}, 
volume={114}, 
ISSN={1079-7114}, 
url={http://dx.doi.org/10.1103/PhysRevLett.114.090402}, 
DOI={10.1103/physrevlett.114.090402}, 
number={9}, 
journal={Physical Review Letters}, 
publisher={American Physical Society (APS)}, 
author={Ringbauer, M. and Wood, C. J. and Modi, K. and Gilchrist, A. and White, A. G. and Fedrizzi, A.}, year={2015}, 
month=mar,
pages={090402}}

@article{KDL15, 
title={Necessary and Sufficient Conditions for an Extended Noncontextuality in a Broad Class of Quantum Mechanical Systems}, 
volume={115}, 
ISSN={1079-7114}, 
url={http://dx.doi.org/10.1103/PhysRevLett.115.150401}, 
DOI={10.1103/physrevlett.115.150401}, 
number={15}, 
journal={Physical Review Letters}, 
publisher={American Physical Society (APS)}, 
author={Kujala, Janne V. and Dzhafarov, Ehtibar N. and Larsson, Jan-Ake}, 
year={2015}, 
month=oct,
pages={150401}}

@article{MRSR17, 
title={Quantum-coherent mixtures of causal relations}, 
volume={8}, 
ISSN={2041-1723}, 
url={http://dx.doi.org/10.1038/ncomms15149}, 
DOI={10.1038/ncomms15149}, 
number={1}, 
journal={Nature Communications}, 
publisher={Springer Science and Business Media LLC}, 
author={MacLean, Jean-Philippe W. and Ried, Katja and Spekkens, Robert W. and Resch, Kevin J.}, year={2017}, 
month=may,
pages={15149}}

@article{NQGMV21, 
title={Simple and maximally robust processes with no classical common-cause or direct-cause explanation}, 
volume={5}, 
ISSN={2521-327X}, 
url={http://dx.doi.org/10.22331/q-2021-09-09-538}, 
DOI={10.22331/q-2021-09-09-538}, 
journal={Quantum}, 
publisher={Verein zur Forderung des Open Access Publizierens in den Quantenwissenschaften}, author={Nery, Marcello and Quintino, Marco Túlio and Guérin, Philippe Allard and Maciel, Thiago O. and Vianna, Reinaldo O.}, 
year={2021}, 
month=sep, 
pages={538} }

@article{Chaves2017, title={Quantum violation of an instrumental test}, volume={14}, ISSN={1745-2481}, url={http://dx.doi.org/10.1038/s41567-017-0008-5}, DOI={10.1038/s41567-017-0008-5}, number={3}, journal={Nature Physics}, publisher={Springer Science and Business Media LLC}, author={Chaves, Rafael and Carvacho, Gonzalo and Agresti, Iris and Di Giulio, Valerio and Aolita, Leandro and Giacomini, Sandro and Sciarrino, Fabio}, year={2017}, month=dec, pages={291–296} }

@article{Wolfe2020, title={Quantifying Bell: the Resource Theory of Nonclassicality of Common-Cause Boxes}, volume={4}, ISSN={2521-327X}, url={http://dx.doi.org/10.22331/q-2020-06-08-280}, DOI={10.22331/q-2020-06-08-280}, journal={Quantum}, publisher={Verein zur Forderung des Open Access Publizierens in den Quantenwissenschaften}, author={Wolfe, Elie and Schmid, David and Sainz, Ana Belén and Kunjwal, Ravi and Spekkens, Robert W.}, year={2020}, month=jun, pages={280} }

@article{Vieira2025, title={Entanglement-breaking channels are a quantum memory resource}, volume={7}, ISSN={2643-1564}, url={http://dx.doi.org/10.1103/r8lf-bb4p}, DOI={10.1103/r8lf-bb4p}, number={4}, journal={Physical Review Research}, publisher={American Physical Society (APS)}, author={Vieira, Lucas B. and Ku, Huan-Yu and Budroni, Costantino}, pages={10.1103}, year={2025}, month=dec }

@article{Gachechiladze2020, title={Quantifying Causal Influences in the Presence of a Quantum Common Cause}, volume={125}, ISSN={1079-7114}, url={http://dx.doi.org/10.1103/PhysRevLett.125.230401}, DOI={10.1103/physrevlett.125.230401}, number={23}, journal={Physical Review Letters}, publisher={American Physical Society (APS)}, author={Gachechiladze, Mariami and Miklin, Nikolai and Chaves, Rafael}, pages={230401}, year={2020}, month=dec }

@article{Fine1982, title={Hidden Variables, Joint Probability, and the Bell Inequalities}, volume={48}, ISSN={0031-9007}, url={http://dx.doi.org/10.1103/PhysRevLett.48.291}, DOI={10.1103/physrevlett.48.291}, number={5}, journal={Physical Review Letters}, publisher={American Physical Society (APS)}, author={Fine, Arthur}, year={1982}, month=feb, pages={291–295} }

@article{Agresti2022, title={Experimental test of quantum causal influences}, volume={8}, ISSN={2375-2548}, url={http://dx.doi.org/10.1126/sciadv.abm1515}, DOI={10.1126/sciadv.abm1515}, number={8}, journal={Science Advances}, publisher={American Association for the Advancement of Science (AAAS)}, author={Agresti, Iris and Poderini, Davide and Polacchi, Beatrice and Miklin, Nikolai and Gachechiladze, Mariami and Suprano, Alessia and Polino, Emanuele and Milani, Giorgio and Carvacho, Gonzalo and Chaves, Rafael and Sciarrino, Fabio},pages={10.1126}, year={2022}, month=feb }

@article{VanHimbeeck2019, title={Quantum violations in the Instrumental scenario and their relations to the Bell scenario}, volume={3}, ISSN={2521-327X}, url={http://dx.doi.org/10.22331/q-2019-09-16-186}, DOI={10.22331/q-2019-09-16-186}, journal={Quantum}, publisher={Verein zur Forderung des Open Access Publizierens in den Quantenwissenschaften}, author={Van Himbeeck, Thomas and Bohr Brask, Jonatan and Pironio, Stefano and Ramanathan, Ravishankar and Sainz, Ana Belén and Wolfe, Elie}, year={2019}, month=sep, pages={186} }

@article{QVB14, title={Joint Measurability, {Einstein}-{Podolsky}-{Rosen} Steering, and {Bell} Nonlocality}, volume={113}, ISSN={1079-7114}, url={http://dx.doi.org/10.1103/PhysRevLett.113.160402}, DOI={10.1103/physrevlett.113.160402}, number={16}, journal={Physical Review Letters}, publisher={American Physical Society (APS)}, author={Quintino, Marco Túlio and Vértesi, Tamás and Brunner, Nicolas}, pages={160402}, year={2014}, month=oct }

@article{Guhne2023, title={Colloquium: Incompatible measurements in quantum information science}, volume={95}, ISSN={1539-0756}, url={http://dx.doi.org/10.1103/RevModPhys.95.011003}, DOI={10.1103/revmodphys.95.011003}, number={1}, journal={Reviews of Modern Physics}, publisher={American Physical Society (APS)}, author={Gühne, Otfried and Haapasalo, Erkka and Kraft, Tristan and Pellonpää, Juha-Pekka and Uola, Roope}, year={2023}, pages={011003}, month=feb }

@article{Giarmatzi2021, title={Witnessing quantum memory in non-Markovian processes}, volume={5}, ISSN={2521-327X}, url={http://dx.doi.org/10.22331/q-2021-04-26-440}, DOI={10.22331/q-2021-04-26-440}, journal={Quantum}, publisher={Verein zur Forderung des Open Access Publizierens in den Quantenwissenschaften}, author={Giarmatzi, Christina and Costa, Fabio}, year={2021}, month=apr, pages={440} }

@article{santos2025two,
  title={Two-point measurement correlations beyond the quantum regression theorem},
  author={Santos, Leonardo S. V. and G{\"u}hne, Otfried and Nimmrichter, Stefan},
  journal={arXiv:2507.06088},
  url={https://doi.org/10.48550/arXiv.2507.06088},
  year={2025}
}

@article{Bozkurt2025, title={A mechanical quantum memory for microwave photons}, volume={21}, ISSN={1745-2481}, url={http://dx.doi.org/10.1038/s41567-025-02975-w}, DOI={10.1038/s41567-025-02975-w}, number={9}, journal={Nature Physics}, publisher={Springer Science and Business Media LLC}, author={Bozkurt, Alkım B. and Golami, Omid and Yu, Yue and Tian, Hao and Mirhosseini, Mohammad}, year={2025}, month=aug, pages={1469–1474} }

@article{Afzelius2015, title={Quantum memory for photons}, volume={68}, ISSN={1945-0699}, url={http://dx.doi.org/10.1063/PT.3.3021}, DOI={10.1063/pt.3.3021}, number={12}, journal={Physics Today}, publisher={AIP Publishing}, author={Afzelius, Mikael and Gisin, Nicolas and de Riedmatten, Hugues}, year={2015}, month=dec, pages={42–47} }

@article{lorenza_dd,
  title = {Dynamical Decoupling of Open Quantum Systems},
  author = {Viola, Lorenza and Knill, Emanuel and Lloyd, Seth},
  journal = {Phys. Rev. Lett.},
  volume = {82},
  issue = {12},
  pages = {2417--2421},
  numpages = {0},
  year = {1999},
  month = {Mar},
  publisher = {American Physical Society},
  doi = {10.1103/PhysRevLett.82.2417},
  url = {https://link.aps.org/doi/10.1103/PhysRevLett.82.2417}
}

\clearpage

\onecolumngrid
\newgeometry{top=1in, bottom=1in, left=1.5in, right=1.5in}
\begin{center}
\vspace*{\baselineskip}
\hypertarget{SM}{\textbf{--- Supplementary Information ---}}\\ \vspace{0.25cm}
{\textbf{\large Device-independent quantum memory certification in two-point measurement experiments}}
\end{center}
\hspace{1cm}

\begin{center}
\begin{minipage}[t]{0.75\textwidth}
In this Supplementary Information, we expand on the discussions presented in the main text by providing detailed proofs of the main results, clarifying experimental procedures and subtleties, and presenting additional illustrative examples. We first introduce classical and quantum models for two-point measurement experiments. We then establish the connection between classical–quantum gaps and the requirement of a genuine quantum memory, showing that violations of the corresponding classical inequalities certify quantum properties and the quality of such unit. Next, we analyse the quantum causal mechanisms underlying non-classicality; in particular we present theoretical description of the $\alpha$-partial \textsc{swap} experiment. We subsequently provide a detailed account of the experimental platform, including the setup, the implemented measurement sequence, and a comprehensive analysis of experimental errors and models. Finally, we present additional data addressing the cross-talk influence loophole.
\end{minipage}
\end{center}
\vspace{0.5cm}

\setcounter{equation}{0}
\renewcommand{\theequation}{S\arabic{equation}}
\renewcommand{\thefigure}{S\arabic{figure}}
\setcounter{page}{1}
\setcounter{figure}{0}
\makeatletter

% \tableofcontents

\setcounter{secnumdepth}{2}

\section*{A. Two-point measurement experiments: From classical to quantum}

\subsection*{Deriving classical inequalities}

In what follows we present a detailed treatment of classical causal models for two-point measurement (TPM) experiments and their statistical implications; this shall enable us to derive some of the results announced in the main text. 

In a TPM experiment we observe two random variables, $A$ and $B$, recording outcomes at successive times. We model them as elements of a classical, time-ordered process in which $A$
may exert a causal influence on $B$, $A\to B$, but not the reverse (thereby excluding the possibility of retrocausal influences). We also allow for latent variables that can affect $A$ and $B$. These unobserved variables are gathered into their own time series, which may interact with the observed outcomes as illustrated in Fig.~\ref{fig:corr-vs-caus}.

The potential outcome model~\cite{Rubin74} is a framework for formalising causal inference by imagining the outcomes that would occur under different interventions. For each unit or system, one defines a set of potential outcomes, one for each possible value of $A$. Causal influence is then understood as the comparison between these outcomes. Formally, a potential outcome model is defined by two ingredients:
\begin{itemize}
    \item[(1)] Two finite-alphabet random variables $A$ and $B$.
    \item[(2)] A collection of random variables $\mathrm{PO}=\{B_{\dop(A=a)}:a\in\mathcal{A}\}$, satisfying
    \begin{equation}\label{eq:POs}
    B=\sum_{a\in\mathcal{A}}\delta_{a}(A)B_{\dop(A=a)}.
    \end{equation}
\end{itemize}
% \og{Is there some normalization missing here? To be discussed.}
Here, $\mathcal{A}$ and $\mathcal{B}$ are the finite alphabets (here, mapped onto sets of real numbers) of $A$ and $B$ respectively and $\delta_y(x)=1$ if $x=y$ and zero otherwise. $A$ and $B$ are referred to as \emph{observational variables}, while the set $\mathrm{PO}$ denotes the collection of \emph{potential outcomes}. Each variable $B_{\dop(A=a)} \in \mathrm{PO}$ represents the outcome that would result if the cause were hypothetically set exogenously to the value $a \in \mathcal{A}$. Accordingly, the direct causation $A\to B$ is captured by how the distribution $\Pr(B_{\dop(A=a)})$ responds to changes in the value of $a$; for instance, the average causal effect (ACE)~\cite{Pearl09},
\begin{equation}
\mathrm{ACE}=\sup|\Pr(B_{\dop(A=a)}=b)-\Pr(B_{\dop(A=a^\prime)}=b)|.
\end{equation}
One can (at least in principle) estimate the joint distribution $\Pr(AB)$ as well as the distributions of individual potential outcomes $\Pr(B_{\dop(A=a)})$, which are referred to as \emph{factual} distributions. Distributions that cannot be estimated in practice (e.g., joint distributions over multiple potential outcomes) are termed \emph{counterfactuals}. 

\begin{figure}
    \centering
    \includegraphics[width=0.75\linewidth]{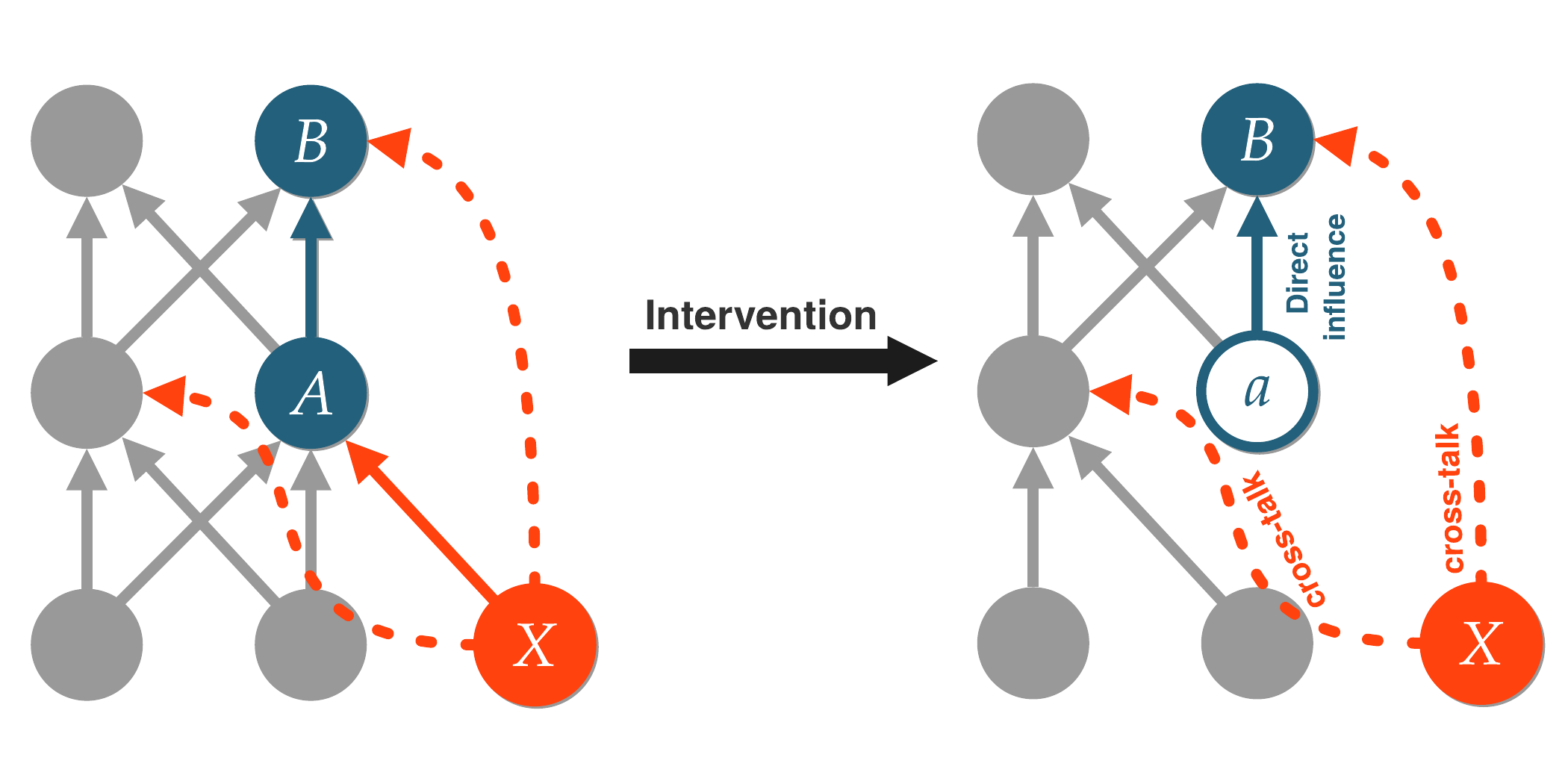}
    \caption{\sf \textbf{Causal influences in TPM experiments.} Bayesian networks representing classical causal models for TPM experiments. The nodes correspond to variables and arrows indicate a potential causal influence. An intervention upon $A$ allows us to access causal influences like direct causal influence $A\to B$ as well as cross-talks.}
    \label{fig:corr-vs-caus}
\end{figure}

Instrumental variables are typically introduced into causal inference motivated by the problem of estimating causal influence without performing interventions~\cite{Pearl09}. For example, if $A$ corresponds to smoking and $B$ to cancer incidence, estimating $\rm ACE$ would involve creating control groups where some people would be forced to smoke, so as to break the influence of potential factors common to smoking and cancer like individual habits or genetics. Another example is when there is not perfect compliance among the individual participants in a treatment trial, where there is a probability that the treatment, for example taking a certain pill, shall be assigned but not actually taken for whatever reason.

Formally, an instrumental variable for a potential outcome model $(A,B,\mathrm{PO})$ is a random variable $X$ with alphabet $\mathcal{X}$ satisfying:
\begin{enumerate}
    \item[(3)] $B_{\dop(A=a,\,X=x)}=B_{\dop(A=a)}$ for all $a\in\mathcal{A}$ and $x\in \mathcal{X}$.
    \item[(4)] $X$ is \emph{jointly independent} on the potential outcomes, $X\perp\!\!\! \perp \mathrm{PO}$.
    % \og{What does "jointly independent" mean? Independent from each of them? Independent to the joint variable?}
\end{enumerate}
Here the symbol ``$\perp\!\!\! \perp$'' denotes statistical independence. Joint independence means statistical independence with joint variables; for instance, saying that $J$ is jointly independent of $K$ and $L$ means $J\perp\!\!\! \perp (K, L)$ and \emph{not} necessarily that $J\perp\!\!\!\perp K$ and $J\perp\!\!\! \perp L$.

\begin{lem}\label{lem:1}
% \og{Is this Lemma known somewhere in the literature?}
If $X$ is an instrumental variable for a potential outcome model $(A,B,\mathrm{PO})$, then
\begin{equation}\label{eq:instrumental-lemma1}
\Pr\big(B_{\dop(A=a)}=b\big) \geq \sup_{x\in\mathcal{X}} \Pr\big(A=a, B=b \mid X=x\big).
\end{equation}
Moreover, if $A$ is binary (i.e., $\mathcal{A}=\{0,1\}$), then
\begin{align}\label{eq:instrumental-lemma2}
\Pr\big(B_{\dop(A=0)}=b_0,\,B_{\dop(A=1)}=b_1\big) \leq \inf_{x\in\mathcal{X}} \Bigl[ &\Pr\big(A=0, B=b_0 \mid X=x\big)\nonumber \\ &+\Pr\big(A=1, B=b_1 \mid X=x\big) \Bigr].
\end{align}
\end{lem}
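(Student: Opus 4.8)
The plan is to derive both inequalities from the same two structural facts, without invoking anything specific to the classical or quantum model. The first is the consistency relation~\eqref{eq:POs}, which on the event $\{A=a\}$ forces $B = B_{\dop(A=a)}$ and thereby identifies a \emph{factual} event with a \emph{counterfactual} one. The second is the pair of instrumental-variable properties~(3)--(4): the exclusion restriction ensures the potential outcomes do not carry an explicit $X$-label, and the joint independence $X \perp\!\!\!\perp \mathrm{PO}$ lets me replace any probability of an event built solely from $\mathrm{PO}$, conditioned on $X=x$, by its unconditional value.

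For the first bound~\eqref{eq:instrumental-lemma1}, I would fix $x$ and start from $\Pr(A=a, B=b \mid X=x)$. On $\{A=a\}$ the relation~\eqref{eq:POs} gives $B = B_{\dop(A=a)}$, so $\{A=a, B=b\}$ coincides with $\{A=a, B_{\dop(A=a)}=b\}$. Discarding the constraint $A=a$ only enlarges the event, hence $\Pr(A=a, B=b \mid X=x) \leq \Pr(B_{\dop(A=a)}=b \mid X=x)$. Since $B_{\dop(A=a)} \in \mathrm{PO}$, joint independence collapses the right-hand side to the $x$-free quantity $\Pr(B_{\dop(A=a)}=b)$. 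Taking the supremum over $x \in \mathcal{X}$ on the left then yields the claim.

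For the binary bound~\eqref{eq:instrumental-lemma2}, I would again fix $x$ but use property~(4) in the opposite direction: the counterfactual joint event $\{B_{\dop(A=0)}=b_0, B_{\dop(A=1)}=b_1\}$ involves only variables in $\mathrm{PO}$, so its probability is unchanged by conditioning on $X=x$. I would then split this conditional probability over the binary value of $A$, writing it as the sum of the $A=0$ and $A=1$ contributions. In the $A=0$ term, consistency replaces $B_{\dop(A=0)}=b_0$ by $B=b_0$ while the surviving constraint on $B_{\dop(A=1)}$ only shrinks the event, so this term is at most $\Pr(A=0, B=b_0 \mid X=x)$; symmetrically the $A=1$ term is at most $\Pr(A=1, B=b_1 \mid X=x)$. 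Summing gives the bound for each fixed $x$, and since the left-hand side is $x$-independent I would conclude by taking the infimum over $x$.

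The measure-theoretic steps—monotonicity under enlarging or shrinking events, and the total-probability split over the binary partition of $A$—are routine; the part demanding care is the bookkeeping that keeps factual and counterfactual events distinct and the correct use of joint independence. In particular, property~(4) must be read as independence of $X$ from the \emph{joint} law of all potential outcomes, not merely from each marginal, since the binary bound manipulates $\Pr(B_{\dop(A=0)}=b_0, B_{\dop(A=1)}=b_1)$ as a single object conditioned on $X=x$. The hypothesis $\mathcal{A}=\{0,1\}$ is likewise essential: it is exactly what makes $\{A=0\}\cup\{A=1\}$ exhaust the sample space, so that the two-term upper bound closes; for larger alphabets the same partition would generate additional contributions and the clean inequality would break down.
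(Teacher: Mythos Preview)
Your proposal is correct and follows essentially the same route as the paper: for both inequalities you use consistency (Eq.~\eqref{eq:POs}) to swap between $B$ and $B_{\dop(A=a)}$ on the event $\{A=a\}$, monotonicity to drop or add constraints, and the instrumental-variable independence $X\perp\!\!\!\perp\mathrm{PO}$ to remove the conditioning on $X=x$; the only cosmetic difference is that in the first bound the paper runs the chain from $\Pr(B_{\dop(A=a)}=b)$ downward rather than from $\Pr(A=a,B=b\mid X=x)$ upward. Your remarks on why \emph{joint} independence (not just marginal) is needed for the second bound, and why binarity of $A$ is essential, are exactly the points the paper's argument relies on.
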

\begin{proof}
For the first inequality, note that for any $x\in\mathcal{X}$ we have
\begin{align*}
\Pr\big(B_{\dop(A=a)}=b\big)
&\overset{\text{(i)}}{=} \Pr\big(B_{\dop(A=a)}=b \mid X=x\big) \nonumber\\[1mm]
&\overset{\text{(ii)}}{\geq} \Pr\big(A=a, B_{\dop(A=a)}=b \mid X=x\big) \nonumber\\[1mm]
&\overset{\text{(iii)}}{=} \Pr\big(A=a, B=b \mid X=x\big),
\end{align*}
where (i) follows from the independence of $X$ and the potential outcomes, (ii) uses the fact that the marginal probability dominates the corresponding joint probability, and (iii) follows from Eq.~\eqref{eq:POs} together with the assumption that $X$ does not alter the potential outcomes. Taking the supremum over $x\in\mathcal{X}$ proves the first part.

For the second inequality, assume that $A$ is binary. Then, for any fixed $x\in\mathcal{X}$, we can write
\begin{align*}
\Pr\big(B_{\dop(A=0)}=b_0,\,B_{\dop(A=1)}=b_1\big)
&\overset{\text{(v)}}{=} \Pr\big(B_{\dop(A=0)}=b_0,\,B_{\dop(A=1)}=b_1 \mid X=x\big) \nonumber\\[1mm]
&=\Pr\big(A=0, B_{\dop(A=0)}=b_0,\,B_{\dop(A=1)}=b_1 \mid X=x\big) \nonumber\\[1mm]
&\quad + \Pr\big(A=1, B_{\dop(A=0)}=b_0,\,B_{\dop(A=1)}=b_1 \mid X=x\big) \nonumber\\[1mm]
&\overset{\text{(vi)}}{\leq}\Pr\big(A=0, B_{\dop(A=0)}=b_0 \mid X=x\big) \nonumber\\[1mm]
&\quad + \Pr\big(A=1, B_{\dop(A=1)}=b_1 \mid X=x\big) \nonumber\\[1mm]
&=\Pr\big(A=0, B=b_0 \mid X=x\big)+\Pr\big(A=1, B=b_1 \mid X=x\big),
\end{align*}
where (v) is justified by the joint independence $X\perp\!\!\! \perp (B_{\dop(A=0)},B_{\dop(A=1)})$, and the inequality (vi) follows from the properties of probabilities. Taking the infimum over $x\in\mathcal{X}$ completes the proof.
\end{proof}

This lemma allows us to derive a class of instrumental inequalities straightforwardly. First notice that summing up over $b$ in both sides of Eq.~\eqref{eq:instrumental-lemma1} we obtain Pearl's inequality,
\begin{equation}\label{eq:Pearl-ineq-SM}
\max_{a\in\mathcal{A}}\sum_{b\in\mathcal{B}}\sup_{x\in\mathcal{X}}\Pr(A=a,B=b\mid X=x)\leq 1.
\end{equation}
It should be noticed that this inequality is a consequence of the \emph{marginal} independence of the instrumental variable and potential outcomes. It is therefore a \emph{factual} constraint, as both the left and right-hand side of Eq.~\eqref{eq:instrumental-lemma1} are probabilities that can be obtained in experiments (at least in principle). We therefore expect, and it was indeed shown in Ref.~\cite{HLP14}, that no reasonable physical theory violates such a inequality as long the $X$ is an actual instrumental variable. 

To derive inequalities that can be meaningfully tested and potentially violated by quantum statistics, it is essential to incorporate counterfactual constraints. Notably, the assumption of joint independence constitutes one such constraint. In particular, by summing over $b_0$ and $b_1$ on both sides of inequality~\eqref{eq:instrumental-lemma2}, we obtain
\begin{equation}\label{eq:KM-SM}
\sum_{b_0,\,b_1\in\mathcal{B}}\inf_{x\in \mathcal{X}}\sum_{a\in\{0,1\}}\Pr(A=a, B=b_a\mid X=x)\geq 1.
\end{equation}
This is the inequality we use to certify quantum properties. Other instrumental inequalities can be found, for example, in the recent work of K\'edagni and Mourifi\'e~\cite{KM20} and the pioneering work of Chaves \emph{et al.}~\cite{Chaves2017}; see also Refs.~\cite{VanHimbeeck2019,Gachechiladze2020,Agresti2022}.

\subsection*{Deriving a quantum inequality}

Now we shall discuss quantum violations of inequality~\eqref{eq:KM-SM}. For this purpose, mapping between the instrumental network (i.e., the network of Fig.~\ref{fig:corr-vs-caus} in the absence of crosstalk) into local hidden-variable (LHV) models for Bell tests shall be very useful, since many results are already established for the latter.

To obtain this mapping we first group the latent variables that jointly influence $A$ and $B$ into a single variable $\Lambda$ so that probabilities factorise as in
\begin{equation}
\Pr(AB\mid X)=\sum_{\lambda} \Pr(\Lambda=\lambda)\Pr(A\mid X,\Lambda=\lambda)\Pr(B\mid  A, \Lambda=\lambda).
\end{equation}
From this, it is obvious that such probabilities can be formally written as post-selection probabilities of a LHV model for a Bell test,
\begin{equation}\label{eq:postselect-bell}
\Pr(AB\mid X)=\delta(Y,A)\Pr(AB\mid X,Y=A)_{\rm Bell},
\end{equation}
where Bell's probabilities in a LHV model it takes the form
\begin{equation}
\Pr(AB\mid XY)_{\rm Bell}=\sum_{\lambda}\Pr(\Lambda=\lambda)\Pr(A\mid X,\Lambda=\lambda)\Pr(B\mid  Y, \Lambda=\lambda).
\end{equation}
Indeed, this interrelation lies behind violations of instrumental inequalities reported in Ref.~\cite{Chaves2017}. Yet. as mentioned in the main text and also discussed below, although we shall use this map as a tool, this is not how we envision nor implement such experiments in practice.

One can define the variables $\hat{A}_X$ (and $\hat{B}_Y$) by the rule that they assume value $+1$ if $A_{\dop(X)} = 0$ ($B_{\dop (Y)}=0$, respectively) and $-1$ otherwise. Then, one can write Bell's probabilities as
\begin{equation}\label{eq:correlations-bell}
\Pr(AB\mid XY)_{\rm Bell}=\frac{1}{4}\left[1+(-1)^A\langle \hat{A}_X\rangle+(-1)^B\langle \hat{B}_Y\rangle+(-1)^{A+B}\langle \hat{A}_X\hat{B}_Y\rangle \right].
\end{equation}
Moreover, one can define the CHSH scores
\begin{equation}\label{eq:CHSH}
\mathrm{CHSH}=\langle \hat{A}_{x} \hat{B}_{y}\rangle+\langle \hat{A}_{x^\prime} \hat{B}_{y}\rangle+\langle \hat{A}_{x} \hat{B}_{y^\prime}\rangle-\langle \hat{A}_{x^\prime} \hat{B}_{y^\prime}\rangle,
\end{equation}
and equivalent ones using symmetry transformations. Then we have:

\begin{lem}\label{lem:2}
Let $X$ be an instrumental variable for a potential outcome model $(A,B,\mathrm{PO})$, with $A$ and $B$ binary. Then
\begin{equation}\label{eq:IKM-CHSH}
\sum_{b_0,\,b_1\in\mathcal{B}}\inf_{x\in \mathcal{X}}\sum_{a\in\{0,1\}}\Pr(A=a, B=b_a\mid X=x)
=2-\frac{\mathrm{CHSH}^{\prime}+\mathrm{CHSH}^{\prime\prime}}{4},
\end{equation}
where the right-hand side involves the sum of two CHSH scores defined in Eq.~\eqref{eq:CHSH}.
\end{lem}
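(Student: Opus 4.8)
The plan is to evaluate the left-hand side directly by importing the Bell representation of Eqs.~\eqref{eq:postselect-bell} and \eqref{eq:correlations-bell}, and then to recognise that the resulting expression collapses into two CHSH scores. First I would use the post-selection map to rewrite each summand $\Pr(A=a,B=b_a\mid X=x)$ as the Bell probability $\Pr(A=a,B=b_a\mid X=x,Y=a)_{\rm Bell}$, so that the instrumental outcome $A=a$ fixes Bob's setting to $Y=a$. Substituting the correlator decomposition of Eq.~\eqref{eq:correlations-bell}, with $Y=0$ in the $a=0$ term and $Y=1$ in the $a=1$ term, expresses each summand as a linear combination of $1$, the marginals $\langle\hat A_x\rangle,\langle\hat B_0\rangle,\langle\hat B_1\rangle$, and the correlators $\langle\hat A_x\hat B_0\rangle,\langle\hat A_x\hat B_1\rangle$.

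The second step is the key cancellation. Adding the $a=0$ and $a=1$ contributions for a fixed pair $(b_0,b_1)$, the single-party term $\langle\hat A_x\rangle$ enters with opposite signs (since $(-1)^A$ flips between $a=0$ and $a=1$) and therefore drops out. What remains inside the infimum is, apart from an $x$-independent constant built from $\langle\hat B_0\rangle,\langle\hat B_1\rangle$, only the $x$-dependent combination $(-1)^{b_0}\langle\hat A_x\hat B_0\rangle-(-1)^{b_1}\langle\hat A_x\hat B_1\rangle$. I would then pull the constant out of $\inf_x$ and observe that, because the remaining expression is linear in the correlators, $\inf_x$ reduces to a minimum, or after a sign flip minus a maximum, of the two quantities $C_x^{(+)}=\langle\hat A_x\hat B_0\rangle-\langle\hat A_x\hat B_1\rangle$ and $C_x^{(-)}=\langle\hat A_x\hat B_0\rangle+\langle\hat A_x\hat B_1\rangle$.

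The third step is to carry out the outer sum over the four configurations $(b_0,b_1)\in\{0,1\}^2$, pairing $(0,0)$ with $(1,1)$ and $(0,1)$ with $(1,0)$. In each pair the $\langle\hat B\rangle$ marginals cancel, the two constants ``$2$'' add to $4$, and the two infima combine as $\min_x C_x^{(\pm)}-\max_x C_x^{(\pm)}=-\bigl(\max_x C_x^{(\pm)}-\min_x C_x^{(\pm)}\bigr)$. Identifying $\max_x C_x^{(\pm)}-\min_x C_x^{(\pm)}$ with a CHSH score evaluated at the extremal settings picked out by the infimum, which I call $\mathrm{CHSH}'$ for the $+$ pair and $\mathrm{CHSH}''$ for the $-$ pair, each pair contributes $1-\tfrac14\mathrm{CHSH}'$ respectively $1-\tfrac14\mathrm{CHSH}''$. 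Summing the two pairs gives $2-(\mathrm{CHSH}'+\mathrm{CHSH}'')/4$, which is the claimed identity.

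I expect the main obstacle to be one of identification and sign bookkeeping rather than of analysis. One must verify that $\max_x C_x^{(\pm)}-\min_x C_x^{(\pm)}$ is genuinely one of the CHSH expressions of Eq.~\eqref{eq:CHSH}, up to the sign symmetries noted there, with Bob's two settings being the post-selected values $Y=0,1$ and Alice's two settings being the maximiser and minimiser over $\mathcal{X}$. Special care is needed with the factors $(-1)^{b_0},(-1)^{b_1}$ so that the $(+)$ and $(-)$ groupings map onto the correct CHSH sign pattern, and with the fact that for a general alphabet $\mathcal{X}$ the two CHSH scores are implicitly optimised over the choice of the two effective settings that the infimum selects.
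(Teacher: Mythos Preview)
Your approach is essentially the paper's: introduce the Bell map of Eqs.~\eqref{eq:postselect-bell}--\eqref{eq:correlations-bell}, let the infimum pick a setting $x_{b_0b_1}$ for each pair $(b_0,b_1)$, expand, cancel the single-party marginals, and read off two CHSH expressions. The computation and the cancellations you describe are correct, and your final identity
\[
\mathcal{I}=2-\tfrac14\bigl[(\max_x C_x^{(+)}-\min_x C_x^{(+)})+(\max_x C_x^{(-)}-\min_x C_x^{(-)})\bigr]
\]
is numerically right. The gap is exactly where you anticipated it: the two bracketed quantities are \emph{not} CHSH scores. With your pairing $(0,0)\leftrightarrow(1,1)$ and $(0,1)\leftrightarrow(1,0)$, the correlator part of each pair carries the sign pattern $(+,-,-,+)$ and $(+,+,-,-)$ respectively, i.e.\ an \emph{even} number of minus signs. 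But every symmetry of Eq.~\eqref{eq:CHSH} (relabelings $\hat A_x\to-\hat A_x$, $\hat B_y\to-\hat B_y$, swaps, overall sign) preserves the \emph{parity} of minus signs, and the genuine CHSH orbit has odd parity. Your $\max_x C_x^{(\pm)}-\min_x C_x^{(\pm)}$ factorises as $\langle(\hat A_{x^*}-\hat A_{x_*})(\hat B_0\mp\hat B_1)\rangle$, which is bounded by $4$ both classically and quantumly and hence carries no Tsirelson-type information on its own.

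The fix is simply to pair differently: group $(0,0)$ with $(0,1)$ and $(1,1)$ with $(1,0)$ (equivalently, group by $b_0$ rather than by $b_0\oplus b_1$). The $\langle\hat B\rangle$ marginals still cancel within each pair, and the correlator blocks now read
\[
\langle\hat A_{x_{00}}\hat B_0\rangle-\langle\hat A_{x_{00}}\hat B_1\rangle+\langle\hat A_{x_{01}}\hat B_0\rangle+\langle\hat A_{x_{01}}\hat B_1\rangle
\quad\text{and}\quad
-\langle\hat A_{x_{11}}\hat B_0\rangle+\langle\hat A_{x_{11}}\hat B_1\rangle-\langle\hat A_{x_{10}}\hat B_0\rangle-\langle\hat A_{x_{10}}\hat B_1\rangle,
\]
each with odd sign-parity and hence a bona fide CHSH score in the sense of Eq.~\eqref{eq:CHSH}. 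This is the grouping the paper (implicitly) relies on to invoke Tsirelson's bound and Kaniewski's self-testing afterwards.
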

\begin{proof}
Define
\begin{align}
\mathcal{I}
&=\sum_{b_0,\,b_1\in\mathcal{B}} \inf_{x\in \mathcal{X}} \sum_{a\in\{0,1\}} \Pr(A=a, B=b_a\mid X=x)\nonumber\\
&=\sum_{b_0,\,b_1\in\mathcal{B}}\Big[\Pr(A=0,B=b_0\mid X=f(b_0,b_1))+\Pr(A=1,B=b_1\mid X=f(b_0,b_1))\Big],
\end{align}
for some function $f:\mathcal{B}\times\mathcal{B}\to\mathcal{X}$.  
Let $x_{b_0b_1}=f(b_0,b_1)$. Since $B$ is binary, we can expand
\begin{align}
\mathcal{I}
&=\Pr(A=0,B=0\mid X=x_{00})+\Pr(A=1,B=0\mid X=x_{00}) \nonumber\\
&\quad+\Pr(A=0,B=1\mid X=x_{11})+\Pr(A=1,B=1\mid X=x_{11}) \nonumber\\
&\quad+\Pr(A=0,B=0\mid X=x_{01})+\Pr(A=1,B=1\mid X=x_{01}) \nonumber\\
&\quad+\Pr(A=0,B=1\mid X=x_{10})+\Pr(A=1,B=0\mid X=x_{10}).
\end{align}
Substituting Eqs.~\eqref{eq:postselect-bell} and \eqref{eq:correlations-bell} into this expression, all single-party terms cancel, giving
\begin{align}\label{eq:Lemma-CHSH}
\mathcal{I}
&=1+\frac{-\langle \hat{A}_{x_{00}}\hat{B}_0\rangle+\langle \hat{A}_{x_{00}}\hat{B}_1\rangle-\langle \hat{A}_{x_{11}}\hat{B}_0\rangle-\langle \hat{A}_{x_{11}}\hat{B}_1\rangle}{4}\nonumber\\
&\quad+1+\frac{\langle \hat{A}_{x_{01}}\hat{B}_0\rangle-\langle \hat{A}_{x_{01}}\hat{B}_1\rangle+\langle \hat{A}_{x_{10}}\hat{B}_0\rangle+\langle \hat{A}_{x_{10}}\hat{B}_1\rangle}{4},
\end{align}
which is precisely Eq.~\eqref{eq:IKM-CHSH}.
\end{proof}

The quantum bound (Tsirelson's bound) for the CHSH inequality is well-known, being $|\mathrm{CHSH}|\leq 2\sqrt{2}$. From this, we conclude that quantum violations of the instrumental inequality, if they exist, are limited to
\begin{equation}
\sum_{b_0,\,b_1\in\mathcal{B}}\inf_{x\in \mathcal{X}}\sum_{a\in\{0,1\}}\Pr(A=a, B=b_a\mid X=x)
\overset{\rm (Q)}{\leq} 2-\sqrt{2}\approx 0.58579.
\end{equation}
We shall show latter when discussing quantum models for TPM experiments that this bound is indeed tight.

\subsection*{Addressing cross-talk influences}

Another consequence of Lemma~\ref{lem:2} is that it enables a natural extension of classical models to situations with cross-talk influences, by using probabilistic couplings in the spirit of quantum contextuality with disturbing measurements~\cite{KDL15}, which is often used to rigorously analyse non-classicality in the presence of imperfections in the form of signalling or measurement disturbance. Here, we use this result by identifying cross-talk in TPM experiments as one-way signalling in a Bell test.

The idea is best motivated by considering the bipartite Bell test. Recall first that Bell locality can be formulated as a classical marginal problem: given some pairwise joint distributions, the question is whether there exists a global distribution that reproduces them as marginals. Formally~\cite{Fine1982}, Bell locality holds if and only if one can find a set of jointly distributed random variables $\tilde{A}_0\tilde{A}_1\dots \tilde{B}_1\tilde{B}_2\dots$ (a \textit{coupling}) such that
\begin{equation}
\Pr(\tilde{A}_X\tilde{B}_Y)=\Pr(AB\mid XY)
\end{equation}
The key insight of Ref.~\cite{KDL15} is to extend this construction to cases where signalling is present. In such situations we must regard the outcomes as context-dependent, i.e. $A_{X,Y}$ and $B_{X,Y}$; that is, the outcome of one part is causally influenced by the choice of setting in the other. One then considers couplings of jointly distributed variables $\tilde{A}_{X,Y}$ and $\tilde{B}_{X,Y}$ chosen so that the probabilities 
\begin{equation*}
\Pr(\tilde{A}_{X,Y=y}=\tilde{A}_{X,Y=y^\prime})\quad\quad\text{and}\quad\quad\Pr(\tilde{B}_{X=x,Y=y}=\tilde{B}_{X=y^\prime,Y}),
\end{equation*}
are as large as possible given the observed marginals,
\begin{equation}
\Pr(\tilde{A}_{X,Y}=A_{X,Y})=\Pr(\tilde{B}_{X,Y}=B_{X,Y})=1.
\end{equation}
In other words, the ``copies'' of the same measurement across different contexts are made maximally consistent with one another. If such a maximal coupling exists, then any observed incompatibility with LHV models cannot be explained away by signalling alone.

In the scenario under analysis, we need only consider signalling from one party to the other, since retro-causality has been excluded \emph{a priori}. By adapting one of the main results of Ref.~\cite{KDL15}, the left-hand side of Eq.~\eqref{eq:Lemma-CHSH} can be expressed as
\begin{align*}
\sum_{b_0,,b_1\in\mathcal{B}}\inf_{x\in \mathcal{X}}\sum_{a\in{0,1}}\Pr(A=a, B=b_a\mid X=x)
\geq 2-\tfrac{1}{4}\Big(&\mathrm{CHSH}^{\prime}+\mathrm{CHSH}^{\prime\prime} \\
&+4\sup\big|\langle B_{\dop(A=a,X=x)}\rangle-\langle B_{\dop(A=a,X=x^\prime)}\rangle\big|\Big),
\end{align*}
assuming $A$ and $B$ binary. Equivalently, this can be written as
\begin{equation}
\sum_{b_0,\,b_1\in\mathcal{B}}\inf_{x\in \mathcal{X}}\sum_{a\in\{0,1\}}\Pr(A=a, B=b_a\mid X=x)+2\mathrm{ACDE}\geq 1,
\end{equation}
leading to Eq.~(3) in the main text. Here, we have introduced the average causal direct effect (ACDE),
\begin{equation}
\mathrm{ACDE}=\sup|\Pr(B_{\dop(A=a,X=x)}=b)-\Pr(B_{\dop(A=a,X=x^\prime)}=b)|.
\end{equation}
The ACDE quantifies cross-talk influences, i.e. direct causal effects $X\to B$ that are not mediated by $A$. Importantly, this quantity is empirically accessible and provides a way to separate genuine non-classical phenomena from spurious violations attributable to cross-talks. 

% \og{I understand the main idea of the part starting from Eq. (A7) as follows: We want to discuss influence and noise. For that, we first derive the inequality (A11), which does not need big assumptions, but it contains CHSH scores, which are well known from Bell inequalities. 
% For CHSH scenarios, influences / cross talk / signalling has been discussed in Ref. [XXX]. So we can use these approaches, with the additional simplification that we only have to consider signalling
% in one direction. This leads to Eq. A17, which is finally testable.

% Is my understanding correct? Then, we should maybe start a subsection
% before A7 and start with such an overview, to explain the goal of the subsection.}

\subsection*{Quantum two-point measurement experiments}

Finally, we now explain how TPM experiments can be described within quantum theory, in particular how they can lead to violations of the classical bound of one for inequality~\eqref{eq:KM-SM}.

Recall that a TPM experiment starts with a system $\bf A^\prime$ that is potentially correlated with a system $\bf E$. First, system $\bf A^\prime$ is locally measured with a quantum instrument, which involves a general measurement where not only the classical outcome is considered but also the post-measurement state. We shall work in the Choi-Jamio{\l}kowski (CJ) picture~\cite{Choi75,J72}, wherein an instrument is represented by a set of positive semidefinite operators $\mathcal{E}_{A}\pos 0$ that satisfy $\tr_{\bf A}\sum_{a}\mathcal{E}_{A=a}=\mathrm{id}_{\bf A^\prime}$. The event $A=a$ occurs with probability $\tr \sigma_{A=a}$, leaving the joint system $\bf A+E$ in the state $(\tr \sigma_{A=a})^{-1}\sigma_{A=a}$,
\begin{equation}
\sigma_A\equiv \tr_{\bf A^\prime}[(\mathrm{id}_{\bf A}\otimes \varrho^{\Gamma_{\bf A^\prime}})(\mathcal{E}_{A}\otimes \mathrm{id}_{\bf E})].
\end{equation}
The symbols $\tr_{\bf X}$ and $\Gamma_{\bf X}$ denote partial trace and partial transposition over a system $\bf X$, respectively, and $\mathrm{id}_{\bf X}$ is the identity operator on the Hilbert space of $\bf X$, $\mathcal{H}_{\bf X}$. The second measurement can be regarded as a positive-operator-value measure (POVM) on $\bf B$, $F_B\pos 0$ and $\sum_b F_{B=b}=\mathrm{id}_{\bf B}$ (this is because no further measurements are of interest, so we may assume, without loss of generality, the system is discarded after the outcome $B$ is observed).

The joint statistics of $A$ and $B$ is computed via a Born's-like rule,
\begin{equation}
\Pr(AB\mid X)=\tr[(\mathcal{E}_{A\mid X}\otimes F_B) W]
\end{equation}
Here, $W$ is the {\emph{process operator}}, defined in terms of $\varrho$ and $U:\bf \mathcal{H}_{A}\otimes \mathcal{H}_E\to \mathcal{H}_B\otimes \mathcal{H}_{E^\prime}$ as
\begin{equation}
W=\tr_{\bf EE^\prime}[(\varrho^{\Gamma_{{\bf E}}}\otimes \mathrm{id}_{\bf ABE^\prime})(\mathrm{id}_{\bf A^\prime}\otimes |U\rangle\!\rangle\langle\!\langle U|)],
\end{equation}
where $|U\rangle\!\rangle$ denotes the vectorisation of $U$. In other words, the process operator completely determines the statistics of TPM experiments, analogously to a density operator for prepare-and-measure or Bell experiments~\cite{WMGWF15}.

To avoid cross-talk, it is essential that the quantum instruments $\mathcal{E}_{A\mid X}$ be such that the result $A$ \emph{separates} the variables $X$ and $B$. This means that the action of setting $X$ to different values cannot leak through the environment or be hidden in the post-measurement state of $\bf A$. In fact, this is possible only if $\mathcal{E}_{A\mid X}$ is a measure-and-prepare instrument of the form
\begin{equation}
\mathcal{E}_{A\mid X}=E_{A\mid X}\otimes \rho_A,
\end{equation}
where $E_{A\mid X}$ is a POVM assemblage (i.e., $E_{A\mid X}\pos 0$ and $\sum_{a} E_{A=a\mid X}=\mathrm{id}_{\bf A^\prime}$) acting upon $\bf A^\prime$ and $\rho_{A=a}$ is the normalised post-measurement state of system $\bf A$ conditioned upon the event $A=a$. Importantly, the states are allowed to depend only on the observed outcome $A=a$ and \emph{not} on the valued setted to $X$.

Within this restrictions, one can define the observational probabilities,
\begin{equation}\label{eq:QuantumTPMInstrumental}
\Pr(AB\mid X)=\tr[(E_{A\mid X}\otimes \rho_A\otimes F_B) W],
\end{equation}
as well as the intervention $\dop(A=a)$, which yields
\begin{equation}\label{eq:Quantumdop}
\Pr(B_{\dop(A=a)}=b)=\tr[(\mathrm{id}_{\bf A^\prime}\otimes \rho_{A=a}\otimes F_{B=b}) W]
\end{equation}
To gain some insight, recall that the identity matrix in the CJ isomorphism corresponds to the partial trace. Therefore, intervention on the process essentially acts on the instrument $\mathcal{E}_{A\mid X}$, discarding (i.e., tracing out) the input and preparing the state $\rho_A$ of interest in an exogenous manner (i.e., by means other than measuring $\bf A^\prime$).

This quantum TPM model is consistent with the classical case, taking the latter as a particular case. Moreover, notice that $\mathrm{id}_{\bf A^\prime}\pos E_{A\mid X}$ implies Eq.~\eqref{eq:instrumental-lemma1} of Lemma~\ref{lem:1},
\begin{equation}
\sup_{x\in\mathcal{X}}\Pr(A=a,B=b\mid X=x)\geq \Pr(B_{\dop(A=a)}=b)\quad \forall x\in\mathcal{X},
\end{equation}
which is directly leads to Pearl's inequality~\eqref{eq:Pearl-ineq-SM}. The other instrumental inequality~\eqref{eq:KM-SM}, in turn, can be violated by quantum TPM correlations. In fact, the process operator
\begin{equation}\label{eq:W222}
W_{222}=|\mathrm{GHZ}\rangle\langle\mathrm{GHZ}|+\sigma_x^{\bf A}|\mathrm{GHZ}\rangle\langle\mathrm{GHZ}|\sigma_x^{\bf A},
\end{equation}
introduced in Ref.~\cite{NQGMV21} to as the ``maximally robust'', maximally violates inequality~\eqref{eq:KM-SM}. Here, $|\mathrm{GHZ}\rangle\propto |000\rangle+|111\rangle$ is the GHZ state and $\sigma_x^{\bf A}$ is the X-Pauli matrix acting on $\bf A$. We shall discuss this violation in more details in the next sections as this was one the models we implement experimentally.

\section*{B. Quantum memory certification}

The purpose of this section is to examine how TPM correlations can be used to certify the correct operation of a quantum memory.

First, recall that a quantum memory stores a quantum state and later returns it on demand, preserving coherence and entanglement with external systems. Ideally, the overall write–store–read operation acts as the identity channel on the stored system (or as a fixed unitary that can be undone), making the memory perfectly reversible. In practice, imperfections are modelled by a noisy quantum channel; i.e., a completely positive, trace-preserving (CPTP) map capturing irreversible effects such as losses, decoherence and dissipation.

The degradation of a quantum channel can lead it towards classicality. When this occurs, the channel $\mathcal{M}$ becomes entanglement-breaking~\cite{HSR03}---i.e., it destroys any entanglement the input system may have had with other degrees of freedom. Equivalently, the channel becomes a measure-and-prepare channel,
\begin{equation}
\mathcal{M}^{\rm EB}(\,\cdot\,)=\sum_\lambda \tr(E_\lambda \,\cdot\,)\, \rho_\lambda
\end{equation}
Here, $E_\lambda\pos 0$ and $\sum_\lambda E_\lambda=\mathrm{id}$ is a POVM and $\rho_\lambda$ are quantum states. When that happens, the device can only store classical information contained within $\lambda$ about the input state. Although entanglement-breaking channels may not be considered classical channels from certain perspectives~\cite{Vieira2025}, they can certainly be regarded as such when discussing memories, since the information storage mechanism is entirely classical in this case.

\begin{prop}\label{prop:QM}
If a TPM process has no quantum memory, then
\begin{equation}\label{eq:WCM}
W=\sum_{\lambda}\mu_\lambda \, \eta_{\bf A^\prime}^\lambda \otimes N_{\bf A\to B}^\lambda,
\end{equation}
where $\mu_\lambda\geq 0$ with $\sum_{\lambda}\mu_\lambda=1$, each $\eta_{\bf A^\prime}^\lambda$ is a density operator, and $N_{\bf A\to B}^\lambda$ is the Choi--Jamiołkowski representation of a channel from $\bf A$ to $\bf B$. Moreover, quantum TPM correlations of the form of Eq.~\eqref{eq:QuantumTPMInstrumental} and \eqref{eq:Quantumdop} are compatible with a classical causal model.
\end{prop}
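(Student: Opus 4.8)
The plan is to turn the physical hypothesis ``no quantum memory'' into an algebraic factorisation of the process operator $W$ and then read a classical instrumental model off that factorisation. I take ``no quantum memory'' to mean that the channel carrying the stored system $\bf E$ from the first to the second time step is entanglement-breaking~\cite{HSR03}, so that it admits a measure-and-prepare form $\mathcal{M}^{\rm EB}(\,\cdot\,)=\sum_\lambda \mathrm{Tr}(E_\lambda\,\cdot\,)\,\rho_{\bf E}^\lambda$ with $\{E_\lambda\}$ a POVM on $\bf E$ and $\rho_{\bf E}^\lambda$ states. First I would apply $\mathrm{id}_{\bf A^\prime}\otimes\mathcal{M}^{\rm EB}$ to the initial state $\varrho$. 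This produces a classical--quantum state $\sum_\lambda \mu_\lambda\,\eta_{\bf A^\prime}^\lambda\otimes\rho_{\bf E}^\lambda$, where $\mu_\lambda\,\eta_{\bf A^\prime}^\lambda=\mathrm{Tr}_{\bf E}[(\mathrm{id}_{\bf A^\prime}\otimes E_\lambda)\varrho]$ is the (sub-normalised) conditional state on $\bf A^\prime$; the essential point is that, conditioned on the classical label $\lambda$, the system fed into the unitary is in a fixed state $\rho_{\bf E}^\lambda$ that is \emph{uncorrelated} with $\bf A^\prime$.

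Next I would exploit that $\bf E$ enters $U$ in the fixed state $\rho_{\bf E}^\lambda$: tracing out the output $\bf E^\prime$ then defines a genuine CPTP map $N^\lambda_{\bf A\to B}(\,\cdot\,)=\mathrm{Tr}_{\bf E^\prime}[U(\,\cdot\,\otimes\rho_{\bf E}^\lambda)U^\dagger]$, whose Choi--Jamio{\l}kowski operator obeys $N^\lambda_{\bf A\to B}\pos 0$ and the trace-preservation constraint by unitarity of $U$ and normalisation of $\rho_{\bf E}^\lambda$. Substituting the classical--quantum state and this channel into the definition of $W$---keeping track of the partial transpose $\varrho^{\Gamma_{\bf E}}$ and of the vectorisation $|U\rangle\!\rangle\langle\!\langle U|$---the $\bf A^\prime$ sector and the $\bf A\to B$ sector decouple at each fixed $\lambda$, yielding $W=\sum_\lambda \mu_\lambda\,\eta_{\bf A^\prime}^\lambda\otimes N^\lambda_{\bf A\to B}$, which is exactly Eq.~\eqref{eq:WCM}. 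Each summand is a product of a preparation on $\bf A^\prime$ and a direct channel $\bf A\to B$, tied together only through the classical index $\lambda$, i.e. a classical common cause supplementing the direct influence.

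For the ``moreover'' claim I would insert this $W$ into Eqs.~\eqref{eq:QuantumTPMInstrumental} and \eqref{eq:Quantumdop}. Defining $\Pr(\Lambda=\lambda)=\mu_\lambda$, $\Pr(A=a\mid X=x,\lambda)=\mathrm{Tr}[E_{a\mid x}\,\eta_{\bf A^\prime}^\lambda]$ and $\Pr(B=b\mid A=a,\lambda)=\mathrm{Tr}[(\rho_{A=a}\otimes F_{B=b})N^\lambda_{\bf A\to B}]$, the observational statistics factor as $\Pr(AB\mid X)=\sum_\lambda\Pr(\lambda)\Pr(A\mid X,\lambda)\Pr(B\mid A,\lambda)$. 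Because $\mathrm{Tr}_{\bf A^\prime}\eta_{\bf A^\prime}^\lambda=1$, the intervention of Eq.~\eqref{eq:Quantumdop} collapses to $\Pr(B_{\dop(A=a)}=b)=\sum_\lambda\Pr(\lambda)\Pr(B\mid A=a,\lambda)$, with no residual dependence on $X$. These are exactly the factual and counterfactual distributions of a classical instrumental model with hidden variable $\lambda$ and instrument $X$, so the correlations admit a classical causal explanation and cannot violate Eq.~\eqref{eq:KM-SM}.

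The step I expect to be the main obstacle is the joint execution of the first two: threading the entanglement-breaking decomposition through the process-operator definition, which carries the partial transpose $\varrho^{\Gamma_{\bf E}}$ and the vectorised unitary, and verifying that at fixed $\lambda$ the two sectors genuinely separate and that $N^\lambda_{\bf A\to B}$ is the Choi operator of a bona fide channel. The conceptual crux is that it is \emph{entanglement-breaking} specifically---rather than some weaker loss of coherence---that renders the connecting label $\lambda$ classical and thereby makes the product decomposition, and with it the classical model, go through.
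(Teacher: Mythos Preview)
Your proposal is correct and follows essentially the same approach as the paper: interpret ``no quantum memory'' as an entanglement-breaking channel on $\bf E$, use it to render $\varrho$ separable across $\bf A^\prime|\bf E$, absorb the fixed $\rho_{\bf E}^\lambda$ into a per-$\lambda$ CPTP map $\bf A\to B$, and then read off the classical instrumental model from the resulting factorisation of $W$. The paper carries out the separation step by introducing the CJ operator $K_{\bf AE\to B}=\tr_{\bf E^\prime}|U\rangle\!\rangle\langle\!\langle U|$ and verifying $\tr_{\bf B}N^\lambda_{\bf A\to B}=\mathrm{id}_{\bf A}$ directly from $\tr_{\bf B}K_{\bf AE\to B}=\mathrm{id}_{\bf AE}$, which is exactly the ``threading'' you flag as the main obstacle and which goes through without surprises.
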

\begin{proof}
If a TPM process has no quantum memory, then the initial state $\varrho_{\bf A^\prime E}$ is mapped as $\varrho_{\bf A^\prime E}\;\mapsto\;\mathcal{M}^{\rm EB}_{\bf E\to E}(\varrho_{\bf A^\prime E})$, where $\mathcal{M}^{\rm EB}_{\bf E\to E}$ is an entanglement-breaking channel. Hence,
\begin{equation}
\mathcal{M}^{\rm EB}_{\bf E\to E}(\varrho_{\bf A^\prime E})
=\sum_{\lambda}\mu_\lambda\, \varrho_{\bf A^\prime}^\lambda\otimes \varrho_{\bf E}^\lambda.
\end{equation}
The TPM process operator is then
\begin{align}
W_{\rm CM}
&=\tr_{\bf EE^\prime}\Big[\big(\mathcal{M}^{\rm EB}_{\bf E\to E}(\varrho_{\bf A^\prime E})\otimes \mathrm{id}_{\bf ABE^\prime}\big)|U\rangle\!\rangle\langle\!\langle U|\Big]\nonumber \\
&\overset{(*)}{=}\tr_{\bf E}\Big[\big(\mathcal{M}^{\rm EB}_{\bf E\to E}(\varrho_{\bf A^\prime E})\otimes \mathrm{id}_{\bf AB}\big)K_{\bf AE\to B}\Big]\nonumber \\
&=\sum_{\lambda}\mu_\lambda\, \varrho_{\bf A^\prime}^\lambda\otimes \tr_{\bf E}\big[(\varrho_{\bf E}^\lambda\otimes \mathrm{id}_{\bf AB})K_{\bf AE\to B}\big],
\end{align}
where in $(*)$ we have defined $K_{\bf AE\to B}=\tr_{\bf E^\prime}\,|U\rangle\!\rangle\langle\!\langle U|$. Next, define
\begin{equation}
N_{\bf A\to B}^\lambda=\tr_{\bf E}\big[(\varrho_{\bf E}^\lambda\otimes \mathrm{id}_{\bf AB})K_{\bf AE\to B}\big].
\end{equation}
For each $\lambda$, we have $N_{\bf A\to B}^\lambda\succcurlyeq 0$ and
\begin{align}
\tr_{\bf B}\,N_{\bf A\to B}^\lambda
&=\tr_{\bf E}\Big[(\varrho_{\bf E}^\lambda\otimes \mathrm{id}_{\bf AB})\tr_{\bf B}\, K_{\bf AE\to B}\Big]\nonumber \\
&=\mathrm{id}_{\bf A}\,\tr \varrho_{\bf E}^\lambda\nonumber \\
&=\mathrm{id}_{\bf A},
\end{align}
since $\tr_{\bf B}\, K_{\bf AE\to B}=\mathrm{id}_{\bf AE}$. Thus, each $N^\lambda_{\bf A\to B}$ is indeed the CJ operator of a channel $\bf A\to B$. 

Finally, quantum TPM probabilities take the form
\begin{align}
\Pr(AB\mid X)
&=\tr\!\left[(E_{A\mid X}\otimes \rho_A\otimes F_B)\,W\right]\nonumber \\
&=\sum_\lambda \mu_\lambda\,\tr\!\big(E_{A\mid X}\,\eta^\lambda_{\bf A^\prime}\big)\,\tr\!\big[(\rho_A\otimes F_B)\,N^\lambda_{\bf A\to B}\big].
\end{align}
This naturally suggests the definitions $\Pr(\Lambda=\lambda)=\mu_\lambda$, $\Pr(A\mid X,\Lambda=\lambda)=\tr(E_{A\mid X}\,\eta^\lambda_{\bf A^\prime})$,
$\Pr(B\mid A,\Lambda=\lambda)=\tr[(\rho_A\otimes F_B)\,N^\lambda_{\bf A\to B}]$. These are valid probability distributions. They are obviously positive and the normalisation of $\Pr(B\mid A,\Lambda=\lambda)$ follows from the fact that each $N^\lambda_{\bf A\to B}$ is the CJ operator of a channel:
\begin{align}
\sum_{b}\Pr(B=b\mid A,\Lambda=\lambda)
&=\tr\big[(\rho_A\otimes \mathrm{id}_{\bf B})\,N^\lambda_{\bf A\to B}\big]\nonumber \\
&=\tr\!\left(\rho_A\,\tr_{\bf B}N^\lambda_{\bf A\to B}\right)\nonumber \\ &=1,
\end{align}
which works only because $\tr_{\bf B}\, N^\lambda_{\bf A\to B}=\mathrm{id}_{\bf A}$. Quantum do-probabilities in turn read
\begin{align}
\Pr(B_{\dop(A=a)}=b)&=\tr[(\mathrm{id}_{\bf A^\prime}\otimes \rho_{A=a}\otimes F_{B=b})W]\nonumber\\ &=\sum_\lambda \mu_\lambda\tr[(\rho_{A=a}\otimes F_{B=b})N_{\bf A\to B}^\lambda] \nonumber\\ &=\sum_\lambda \Pr(\Lambda=\lambda)\Pr(B=b\mid A=a,\Lambda=\lambda).
\end{align}
This concludes the proof.
\end{proof}

Therefore, the violation of a classical causal inequality proves that a quantum memory cannot be simulated as a classical classical memory acting as a entanglement-breaking channel. 

One way to quantify the quality of a memory is in terms of its fidelity to an ideal memory. To do this, consider $M$ the normalised Choi-Jamio{\l}kowski (CJ) operator of a quantum channel,
\begin{equation}
M=(\mathrm{id}\otimes \mathcal{M})(|\mathrm{id}\rangle\langle\mathrm{id}|),
\end{equation}
where
\begin{equation}
\ket{\mathrm{id}}=\frac{1}{\sqrt{d}}\sum_i \ket{i}\otimes \ket{i},
\end{equation}
and $d$ denotes the system dimension. We define the fidelity to an ideal quantum memory (here, the identity channel with no loss of generality) by
\begin{equation}
F_{\rm Memory}=\bra{\mathrm{id}}M\ket{\mathrm{id}}.
\end{equation}
For the case of qubit memories, this quantity can be estimated by invoking a self-testing bound proved by Jędrzej Kaniewski~\cite{Kaniewski16}. Such a bound allows us to estimate fidelity of a two-qubit state to the ideal Bell state can be DI lower bounded in a Bell test by CHSH violations. This result together with the mapping of Eq.~\eqref{eq:IKM-CHSH} and the CJ isomorphism lead us to the following bound
\begin{equation}
F_{\rm Memory}\geq\frac{1}{2}\left[1-\frac{1}{\sqrt{2}-S_{\rm K}}\left(\sum_{b_0,b_1}\inf_{x\in \mathcal{X}}\sum_{a}\Pr(A=a,B=B_0\mid X=x)-2+S_{\rm K}\right)\right],
\end{equation}
where $S_{\rm K}=(8+7\sqrt{2})/17$.

Finally, from the structure of the process operator, we see that a classical memory (i.e., an entanglement-breaking channel) yields a specific form for $W$, as characterised by a constrained separability problem [Eq.~\eqref{eq:WCM}]. In particular, note that in Eq.~\eqref{eq:WCM} the operators $N_{\bf A\to B}^\lambda$ must not only be positive, as in the standard separability problem, but must also satisfy the marginal constraints $\tr_{\bf B}, N_{\bf A\to B}^\lambda=\mathrm{id}_{\bf A}$ for all $\lambda$. This imposes substantial differences from ordinary entanglement structure in the process operator $W$; see Refs.~\cite{Giarmatzi2021,NQGMV21,santos2025two} for an in-depth discussion. Notably, Ref.~\cite{NQGMV21} found an instance of a TPM process operator that is separable in the usual sense yet cannot be decomposed in the form of Eq.~\eqref{eq:WCM}.

It is thus natural to ask whether our results allow for a sharper distinction—namely, whether there exists a gap between mere entanglement in a TPM process operator and the genuinely quantum memory required to violate classical inequalities. Although we were unable to find a quantum violation for the process introduced in Ref.~\cite{NQGMV21}, we identified another process with the same separability properties which, crucially, \emph{maximally violates the classical bound}. Specifically, define
\begin{equation}
\Upsilon_p = p\,W_{222} + (1-p)\,V_{\bf A}\,W_{222}\,V_{\bf A}^\dagger,
\end{equation}
where $0 \leq p \leq 1$, $V_{\bf A}=H\sigma_x$ is the composition of the X-Pauli and Hadamard gates on qubit $\bf A$, and $W_{222}$ is the process introduced in Eq.~\eqref{eq:W222}. A straightforward calculation shows that this family yields maximal quantum violations for all $p$ in the interval $0\leq p\leq 1$, yet for $p=1/2$ the process is separable. To the best of our knowledge, this provides the strongest demonstration to date that the marginal constraints in the decomposition of Eq.~\eqref{eq:WCM} have strong and physically meaningful consequences for TPM processes.

\section*{C. Quantum causal mechanisms}

In the quantum TPM scenario, we saw a gap between classical and quantum predictions. It is therefore natural to inquire about quantum mechanisms behind such mismatch. Here, we explore properties in the causal mechanisms of quantum models that differentiate them from classical counterparts.

Correlations between two variables, here $A$ and $B$, are typically explained in terms of common-causes (CC) and direct causes (DC). In a pure quantum CC scenario, all correlations between $A$ and $B$ are explainable in terms of shared non-signalling resources; that is, a shared state $\varrho$. If this state is entangled, we have a non-classical CC, which can lead, for example, to the violation of a Bell inequality. 

Entanglement as a synonym of quantum CC is a well established notion in literature; see, e.g., Ref.~\cite{Wolfe2020} for in-depth discussion. In contrast, for DC there is no definition with a similar status. Here, we identify non-classicality in DC as the possibility to generate incompatibility in Bob's measurements; that is, when there exists states $\rho_A$ for the system $\bf \bar{A}$ and some POVM $F_B$ on the system $\bf B$ such that the \emph{assemblage} of POVMs defined via
\begin{equation}\label{eq:GBASM}
G_{B\mid A}=\tr_{\rm A}[U^\dagger(\mathrm{id}_{\bf E^\prime}\otimes F_B)U(\mathrm{id}_{\bf E}\otimes\rho_A)]
\end{equation}
is incompatible (i.e., not jointly measurable). In other words, one \emph{cannot} observe a classical-quantum gap in TPM experiments if for all $\rho_A$ and $F_B$, $G_{B\mid A}$ can be implemented with a single POVM followed by classical post-processing; that is, if one can find a POVM $H_\lambda$ and a classical channels $h(B\mid X \Lambda)$ such that
\begin{equation}\label{eq:quantumDC}
G_{B\mid A}=\sum_{\lambda} h(B\mid A, \Lambda=\lambda)H_\lambda.
\end{equation}
Measurement incompatibility is also required in the POVM $E_{A\mid X}$ [Eq.~\eqref{eq:QuantumTPMInstrumental}].

\begin{figure}
    \centering
    \includegraphics[width=0.9\linewidth]{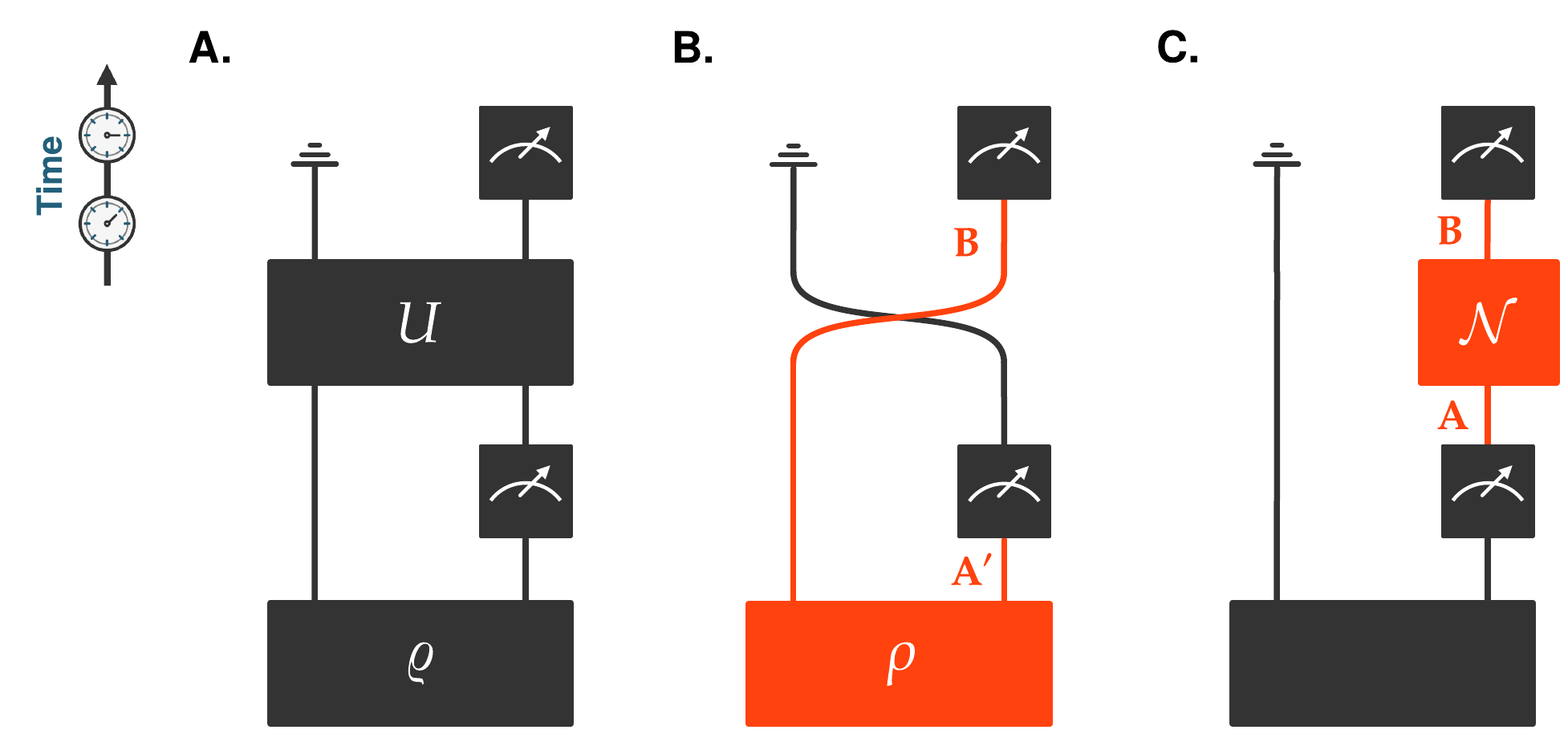}
    \caption{\sf \textbf{Quantum common-causes and direct-causes.} {\bf\textsf{A.}} A general TPM process synthesises both common-cause and direct-cause mechanisms, encapsulated either by a pair $(\varrho, U)$ or, from an agent perspective, by a process operator $W$. {\bf\textsf{B.}} Common-cause (CC) processes are those in which all observed correlations originate from a shared quantum state $\rho_{\bf A^\prime B}$ (should not be confused with $\varrho$ in panel {\bf\textsf{A.}}). {\bf\textsf{C.}} Direct-cause (DC) processes, by contrast, are those in which all observed correlations arise from the direct influence $\bf A \to B$, mediated by a quantum channel $\mathcal{N}_{\bf A\to B}$.}
    \label{fig:CC-DC}
\end{figure}

From the agents' perspective, all available information that can be obtained is the process operator (see Fig.~\ref{fig:CC-DC}). In this context, they identify the process as CC-type when there is no detectable direct causal influence; that is, the process matrix only encodes shared correlations,
\begin{equation}\label{eq:classical_CCDC}
W_{\rm CC}=\rho_{\bf A^\prime B} \otimes \mathrm{id}_{\bf A},
\end{equation}
where $\rho_{\bf A^\prime B}$ is a state the combined system $\bf A^\prime+B$. In those cases, all statistical correlations are explained by the sharing of correlations encoded in the bipartite state $\rho_{\bf A^\prime B}$, without invoking any DC influence. In contrast, the process is DC-type when statistical correlations result solely from direct causation, without any influence from confounding factors. The process operator for a DC-type process takes the form
\begin{equation}
W_{\rm DC}=\rho_{\bf A^\prime}\otimes N_{\bf A\to B},
\end{equation}
where $N_{\bf A\to B}$ represents a channel (i.e., the CJ operator of a CPTP map $\mathcal{N}_{\bf A\to B}$) connecting system $\bf A$ to system $\bf B$, and $\rho_{\bf A^\prime}$ a state for system $\bf A^\prime$. If a process is such that the process operator belongs to the convex hull of CC-type and DC-type processes, i.e.,
\begin{equation}
W_{\rm CCDC}=pW_{\rm CC}+(1-p)\sum_{\lambda}P(\lambda)W_{\rm DC}^{\lambda},
\end{equation}
$P(\lambda)\geq 0$, $\sum_{\lambda}P(\lambda)=1$, and $0\leq p\leq 1$, then we say that the CC-DC combination is classical. Note that individually, both CC and DC can be non-classical from the perspective of those who know $\varrho$ and $U$, but their combination from the agents' perspective can be classical, i.e., possible to simulate with the help of a classical mechanism.

In summary, we identify three distinct notions of quantumness. Namely, for a given pair $(\varrho,U)$ implementing a TPM process, we say that
\begin{enumerate}
    \hypertarget{C1}{\item[(C1)]} the CC is classical if $\varrho$ is separable (i.e., non-entangled);
    \hypertarget{C2}{\item[(C2)]} the DC is classical if $U$ does \emph{not} induce incompatibility via Eq.~\eqref{eq:quantumDC};
    \hypertarget{C3}{\item[(C3)]} the relation between CC and DC is classical if the process operator $W$ can be decomposed as in Eq.~\eqref{eq:classical_CCDC}.
\end{enumerate}

\begin{prop}
Any TPM process, characterised by a process operator $W(\varrho, U)$ and satisfying any of the conditions (C1), (C2), or (C3) discussed above, is compatible with classical causal models and thus does not violate any classical inequality.
\end{prop}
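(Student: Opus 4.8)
The plan is to treat the three conditions separately and, in each case, to exhibit an explicit instrumental local-hidden-variable (LHV) decomposition of the observed statistics, namely $\Pr(AB\mid X)=\sum_\lambda \Pr(\Lambda=\lambda)\,\Pr(A\mid X,\Lambda=\lambda)\,\Pr(B\mid A,\Lambda=\lambda)$ with $\Lambda\perp X$. By Lemma~\ref{lem:1} and the derivation of inequality~\eqref{eq:KM-SM}, any correlations admitting such a decomposition satisfy all the classical (Pearl and instrumental) inequalities, so this is exactly what needs to be established. Condition (C1) is then essentially immediate from Proposition~\ref{prop:QM}: a separable $\varrho=\sum_\lambda\mu_\lambda\,\varrho^\lambda_{\bf A^\prime}\otimes\varrho^\lambda_{\bf E}$ is precisely the output of an entanglement-breaking channel acting on $\bf E$, so the process operator $W$ collapses to the form of Eq.~\eqref{eq:WCM} by repeating verbatim the computation of the $N^\lambda_{\bf A\to B}$ carried out there, and the factorised probabilities follow at once.

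For (C2) I would pass to the effective assemblage on $\bf E$. Operationally, measuring $\bf A^\prime$ with $E_{A\mid X}$ leaves $\bf E$ in the unnormalised conditional state $\tilde\sigma^{X,a}_{\bf E}=\tr_{\bf A^\prime}[(E_{A=a\mid X}\otimes\mathrm{id}_{\bf E})\varrho]$, after which the indirect measurement of Eq.~\eqref{eq:GBASM} gives $\Pr(A=a,B=b\mid X)=\tr[G_{B=b\mid A=a}\,\tilde\sigma^{X,a}_{\bf E}]$. Inserting the joint-measurability decomposition~\eqref{eq:quantumDC}, $G_{B=b\mid A=a}=\sum_\lambda h(b\mid a,\lambda)H_\lambda$, yields $\Pr(A=a,B=b\mid X)=\sum_\lambda h(b\mid a,\lambda)\,\tr[(E_{A=a\mid X}\otimes H_\lambda)\varrho]$. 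I would then read off $\Pr(\Lambda=\lambda)=\tr[H_\lambda\varrho_{\bf E}]$ as the latent weight, $\Pr(A\mid X,\Lambda)$ from the joint term, and $\Pr(B\mid A,\Lambda)=h(b\mid a,\lambda)$ as the stochastic response. The crucial point — and the main obstacle — is verifying that $X$ remains a genuine instrument, i.e. $\Lambda\perp X$; this holds because the mother POVM $H_\lambda$ acts only on $\bf E$, so $\Pr(\Lambda=\lambda)=\tr[H_\lambda\varrho_{\bf E}]$ is manifestly $X$-independent by no-signalling. Finally I would check that the intervention probabilities~\eqref{eq:Quantumdop} reduce to $\sum_\lambda\Pr(\Lambda=\lambda)\,h(b\mid a,\lambda)$, confirming the absence of cross-talk influence.

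For (C3) I would exploit the linearity of $\Pr(AB\mid X)$ in $W$ together with the convexity of the set of classical correlations. The direct-cause summands $W^\lambda_{\rm DC}=\rho^\lambda_{\bf A^\prime}\otimes N^\lambda_{\bf A\to B}$ are single-$\lambda$ instances of Eq.~\eqref{eq:WCM}, hence classical by Proposition~\ref{prop:QM}. For the common-cause term $W_{\rm CC}=\rho_{\bf A^\prime B}\otimes\mathrm{id}_{\bf A}$, the factor $\mathrm{id}_{\bf A}$ erases all dependence on the prepared state $\rho_A$, so $\Pr(A=a,B=b\mid X)=\tr[(E_{A=a\mid X}\otimes F_{B=b})\rho_{\bf A^\prime B}]$ with a single, $a$-independent measurement $F_B$ on $\bf B$; taking the latent variable to be $B$ itself — legitimate since the reduced state $\rho_{\bf B}$, and hence $\Pr(B=b)$, is $X$-independent by no-signalling — gives the LHV form with the deterministic, $A$-ignoring response $\Pr(B=b^\prime\mid A,\Lambda=b)=\delta_{b^\prime,b}$. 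Mixing these components with the fixed weights $p$ and $(1-p)P(\lambda)$ preserves $\Lambda\perp X$, so the convex combination $W_{\rm CCDC}$ is again classical. I expect the no-signalling independence arguments underlying (C2) and (C3) to be the only genuinely delicate steps; everything else is direct substitution into the instrumental LHV template.
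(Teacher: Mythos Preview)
Your proposal is correct and follows essentially the same route as the paper: (C1) via Proposition~\ref{prop:QM}, (C2) via the Bell-type identity $\Pr(AB\mid X)=\tr[(E_{A\mid X}\otimes G_{B\mid A})\varrho]$ together with joint measurability of $G_{B\mid A}$, and (C3) by reducing DC terms to Proposition~\ref{prop:QM}, observing that CC terms render $G_{B\mid A}$ independent of $A$ (hence trivially jointly measurable), and invoking convexity. The paper's proof is considerably terser, invoking the mapping to the Bell scenario and the standard implication ``joint measurability $\Rightarrow$ LHV'' without spelling out the explicit hidden-variable decompositions you provide, but the underlying ideas coincide.
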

\begin{proof}
The fact that condition (C1) implies classicality was already established in Proposition~\ref{prop:QM}. Next, invoking the quantum analogue of the mapping in Eq.~\eqref{eq:postselect-bell},
\begin{equation}
\Pr(AB\mid X)=\tr[(E_{A\mid X}\otimes G_{B\mid A})\,\varrho],
\end{equation}
it follows directly that the joint measurability of either $E_{A\mid X}$ or $G_{B\mid A}$ entails classicality. This likewise yields the same conclusion for CC-type process operators, for which $G_{B\mid A}$ is independent of~$A$. Finally, the convex hull of DC-type process operators coincides with the set of processes possessing no quantum memory, which, by Proposition~\ref{prop:QM}, are classical. By convexity, it follows that (C3) also implies classicality.
\end{proof}

One family of interesting processes that allows us to explore these notions of non-classicality are those consisting of a superposition between CC and DC, where the unitary $U$ is the two-qubit $\alpha$-partial \textsc{swap} gate
\begin{equation}
U_{\rm PS}(\alpha)=\cos\left(\frac{\alpha}{2}\right)\mathrm{id}+\ii\sin\left(\frac{\alpha}{2}\right)U_{\textsc{swap}}.
\end{equation}
For concreteness, we consider the state $\varrho$ to be the two-qubit Bell state, $\varrho=\ket{\mathrm{id}}\bra{\mathrm{id}}$. It is easy to see that $\alpha=0$ it is a DC process, whereas for $\alpha=\pi$ it is a CC-type process. For intermediate value, it can be neither nor a convex combination thereof. This can be proved by violations of classical causal inequalities. 

One particularly interesting family of processes for exploring these notions of non-classicality consists of superpositions of CC and DC mechanisms, where the unitary $U$ is taken to be the two-qubit $\alpha$-partial \textsc{swap} gate,
\begin{equation}
U_{\rm PS}(\alpha)=\cos\left(\frac{\alpha}{2}\right)\mathrm{id}+\ii\sin\left(\frac{\alpha}{2}\right)U_{\textsc{swap}}.
\end{equation}
For concreteness, we take the state $\varrho$ to be the two-qubit Bell state $\varrho=\ket{\mathrm{id}}\bra{\mathrm{id}}$. It is straightforward to verify that for $\alpha=0$ the process is DC, while for $\alpha=\pi$ it becomes CC-type. For intermediate values, however, the process is neither DC nor CC, nor even a convex combination of the two. This fact can be demonstrated through violations of classical causal inequalities.

A routine calculation reveals that this process exhibits quantum violations, quantified by the function
\begin{equation}\label{eq:gammaalpha}
\Gamma_\alpha=\frac{3-\sin \alpha +\cos \alpha}{2}.
\end{equation}
Quantum violations occur whenever $\Gamma_\alpha<1$, which holds for $\pi/2<\alpha<\pi$ (see Fig.~\ref{fig:violationswmrsr}). This implies that throughout this interval the process displays \emph{all} of the non-classical features discussed above.

\begin{figure}
    \centering
    \includegraphics[width=0.55\linewidth]{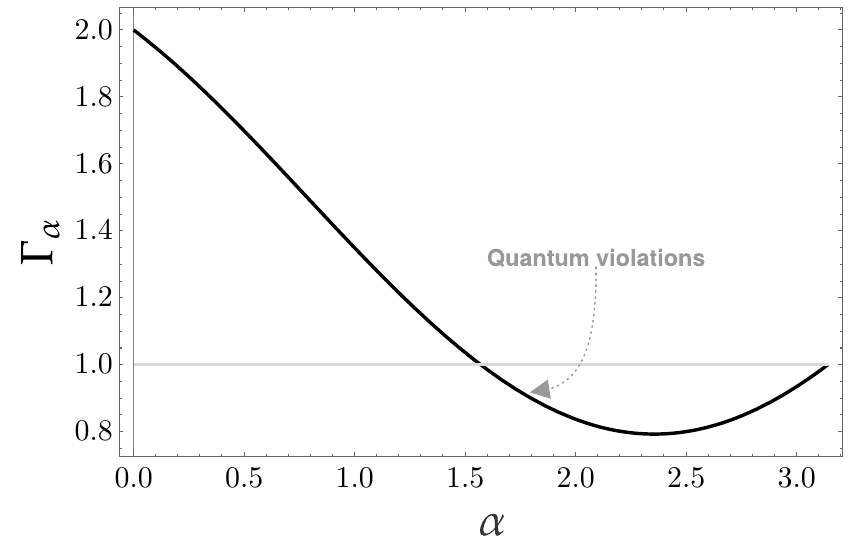}
    \caption{\sf\textbf{Quantum violations with partial $\alpha$-swap gate.} Plot of the $\Gamma_\alpha$ function [Eq.~\eqref{eq:gammaalpha}]. Quantum violations occurs when $\Gamma_\alpha<1$, which is the case if $\pi/2<\alpha<\pi$.}
    \label{fig:violationswmrsr}
\end{figure}

It is thus natural to inquire why quantum violations are confined to this particular interval. Indeed, Refs.~\cite{MRSR17,NQGMV21} showed that the process operator corresponding to $\alpha=\pi/2$ (termed there the ``coherent mixture of CC and DC'') cannot be expressed as an incoherent mixture of CC and DC, yet it nonetheless fails to violate the classical inequality. The reason quantum violations do not arise outside the interval identified above is that, although the process cannot be decomposed into a mixture of CC and DC, the $\alpha$-partial \textsc{swap} gate does not generate measurement incompatibility, as we demonstrate below.

\begin{prop}
Let $G_{B\mid A}^\alpha$ be the measurement assemblage generated through the $\alpha$-partial {\sc swap} gate, $0\leq \alpha\leq \pi$, via Eq.~\eqref{eq:GBASM}. If $A$ and $B$ are binary variables, $G_{B\mid A}^\alpha$ is always jointly measurable for $0\leq \alpha\leq \pi/2$ and $\alpha=\pi$ and can be incompatible otherwise.
\end{prop}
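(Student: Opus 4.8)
The plan is to compute the assemblage $G^\alpha_{B\mid A}$ explicitly in the Bloch representation and then apply the known joint-measurability criterion for pairs of binary qubit observables. Parametrising the preparations as $\rho_a=\tfrac12(\mathbb I+\vec r_a\cdot\vec\sigma)$ and the second measurement by the effect $F_0=\tfrac12(\beta\,\mathbb I+\vec f\cdot\vec\sigma)$ (so $F_1=\mathbb I-F_0$), and expanding $U_{\rm PS}(\alpha)=c\,\mathrm{id}+\ii s\,U_{\textsc{swap}}$ with $c=\cos(\alpha/2)$, $s=\sin(\alpha/2)$ via $U_{\textsc{swap}}=\tfrac12(\mathbb I\otimes\mathbb I+\sum_k\sigma_k\otimes\sigma_k)$, a direct substitution into Eq.~\eqref{eq:GBASM} gives
\begin{equation*}
G^\alpha_{0\mid a}=\tfrac12\big(g_a\,\mathbb I_{\bf E}+\vec n_a\cdot\vec\sigma\big),\qquad g_a=\beta+c^2\,\vec r_a\cdot\vec f,\qquad \vec n_a=s^2\vec f+cs\,(\vec f\times\vec r_a).
\end{equation*}
Thus $\{G^\alpha_{0\mid a},G^\alpha_{1\mid a}\}_{a\in\{0,1\}}$ is a pair of dichotomic qubit POVMs with biases $g_a-1$ and Bloch vectors $\vec n_a$, and the question reduces to deciding for which $\alpha$ this pair is jointly measurable as $\rho_a$ and $F_b$ range over all admissible values.

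Next I would reduce to an extremal configuration. Since joint measurability can only be preserved, never destroyed, by mixing an effect (or a preparation) with a trivial one, the most incompatible case is reached for a sharp second measurement ($\beta=1$, $|\vec f|=1$) and pure preparations ($|\vec r_a|=1$); indeed $\rho_a\mapsto\lambda\rho_a+(1-\lambda)\mathbb I/2$ and unsharpening $F$ each map the assemblage to a convex combination with a jointly measurable one. Writing $\vec r_a=(\vec r_a\cdot\hat f)\hat f+\vec r_a^{\,\perp}$ one finds $|\vec n_a|^2=s^4+c^2s^2\big(1-(\vec r_a\cdot\hat f)^2\big)$ and bias $g_a-1=c^2(\vec r_a\cdot\hat f)$, so the choice $\vec r_a\perp\vec f$ \emph{simultaneously} maximises the sharpness $|\vec n_a|=s$ and cancels the bias, making it the worst case. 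In that regime $\vec n_a=s^2\hat f+cs\,(\hat f\times\vec r_a)$ with $\hat f\times\vec r_a$ a unit vector orthogonal to $\hat f$, and the two observables are unbiased with equal Bloch length $s$.

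I would then invoke the unbiased qubit criterion, namely that two effects with Bloch vectors $\vec n_0,\vec n_1$ are jointly measurable iff $|\vec n_0+\vec n_1|+|\vec n_0-\vec n_1|\le2$. Letting $2\theta$ be the angle between $\hat f\times\vec r_0$ and $\hat f\times\vec r_1$ gives $|\vec n_0+\vec n_1|=2s\sqrt{s^2+c^2\cos^2\theta}$ and $|\vec n_0-\vec n_1|=2cs\sin\theta$; optimising the sum over $\theta$ yields the interior maximum $2\sqrt2\,s=2\sqrt2\sin(\alpha/2)$ for $\alpha\le\pi/2$ and the boundary value $1-\cos\alpha+\sin\alpha$ (at $\theta=\pi/2$, i.e.\ antipodal preparations $\vec r_1=-\vec r_0$) for $\alpha\ge\pi/2$. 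The first is $\le2$ precisely when $\sin(\alpha/2)\le1/\sqrt2$, i.e.\ $\alpha\le\pi/2$; the second exceeds $2$ exactly when $\sin\alpha-\cos\alpha>1$, i.e.\ $\pi/2<\alpha<\pi$. This gives joint measurability on $[0,\pi/2]$ and incompatibility (witnessed by the antipodal perpendicular configuration) on $(\pi/2,\pi)$. The endpoint $\alpha=\pi$ is immediate: there $c=0$, so $\vec n_a=\vec f$ independently of $a$, both POVMs coincide, and they are trivially jointly measurable.

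The step I expect to be the main obstacle is the rigorous reduction to the unbiased, sharp, pure-state configuration---in particular, showing that tilting $\vec r_a$ towards $\vec f$, which trades sharpness for bias, can never increase incompatibility. While the heuristic is clear because the perpendicular choice maximises $|\vec n_a|$ and removes the bias at once, a fully rigorous argument requires the general \emph{biased} joint-measurability criterion for two qubit observables; plugging the Bloch data above into that criterion and carrying out the joint optimisation over $(\vec r_0,\vec r_1,\vec f,\beta)$ is the technical heart of the proof. As a consistency check, the incompatibility threshold coincides with the condition $\Gamma_\alpha<1$ from Eq.~\eqref{eq:gammaalpha}, as it must, since by the preceding proposition joint measurability of $G_{B\mid A}$ forbids any violation of the classical bound.
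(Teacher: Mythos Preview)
Your overall strategy coincides with the paper's: both compute the Bloch data of $G^\alpha_{B\mid A}$ and then decide joint measurability via the known criterion for a pair of dichotomic qubit POVMs. Your vector formula $g_a=\beta+c^2\,\vec r_a\!\cdot\!\vec f$, $\vec n_a=s^2\vec f+cs\,(\vec f\times\vec r_a)$ is equivalent to (and more transparent than) the paper's explicit coordinate expressions obtained after fixing $\sigma_{A=0}=\ket{0}\!\bra{0}$ by the permutation symmetry of $U_{\rm PS}(\alpha)$.

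The difference is organisational. You try to reduce first to the unbiased, perpendicular configuration and then invoke the simpler Busch criterion $|\vec n_0+\vec n_1|+|\vec n_0-\vec n_1|\le2$; the paper instead applies the general \emph{biased} criterion (quoted from Ref.~\cite{Guhne2023}) directly in full generality. The paper's key simplification---which is exactly what you would need to fill your acknowledged gap---is that for pure preparations and sharp $F$ the unsharpness parameters satisfy $\mathcal F_0=\mathcal F_1=\cos(\alpha/2)$ identically, independent of the states and of $\vec f$; substituting this collapses the biased criterion to the form $g(\alpha,\ldots)^2+f(\alpha,\ldots)\cos\alpha\ge0$, from which the conclusion for $0\le\alpha\le\pi/2$ is read off. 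So your ``main obstacle'' is precisely what the paper computes head-on. Your detour through the unbiased case buys cleaner geometry and, in addition, an explicit incompatibility witness on $(\pi/2,\pi)$ via the antipodal perpendicular preparations (the paper's proof does not spell this direction out), but it still requires that same biased-criterion computation to close the reduction step you flag.
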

\begin{proof}
First, note that with some manipulations one can write
\begin{align}\label{eq:meas_assemb2}
G_{B\mid A}^{\alpha}=&\cos^2\left(\frac{\alpha}{2}\right)\tr(F_B\sigma_A)\mathrm{id}+\sin^2\left(\frac{\alpha}{2}\right)F_B-\ii \sin\left(\frac{\alpha}{2}\right)\cos\left(\frac{\alpha}{2}\right) [F_B,\sigma_A].
\end{align}
Joint measurability for $\alpha=\pi$ is trivial, since $F_{B\mid A}^{\hspace{0.025cm}\pi}=F_B$, i.e., it does not depend on $A$. Eq.~\eqref{eq:meas_assemb2} is linear in both states and POVM effects and thus we can restrict ourselves to projective measurements and pure states. Moreover, we explore the fact that $U_{\rm PS}(\alpha)$ lies within the symmetric subspace to set, without loss of generality, $\sigma_{A=0}=\ket{0}\bra{0}$ and $\sigma_{A=1}=\ket{\theta_{\rm S},\phi_{\rm S}}\bra{\theta_{\rm S},\phi_{\rm S}}$ as well as $E_{B=0}=\ket{\theta_{\rm E},\phi_{\rm E}}\bra{\theta_{\rm E},\phi_{\rm E}}$. Moreover, we parametrise the POVM effects as follows
\begin{equation}
F_{B=0|A}^{\hspace{0.05cm}\alpha}=\frac{1}{2}[(1+\gamma_A)\mathrm{id}+\boldsymbol{r}_A\cdot \boldsymbol{\sigma}],
\end{equation}
where $\boldsymbol{r}_A\cdot \boldsymbol{\sigma}=r_A^x \sigma_x+r_A^y \sigma_y+r_A^z \sigma_z$. We do so in order to apply the criterion for compatibility of a pair of qubit POVMs (see Ref.~\cite{Guhne2023} and references therein). The criterion states that $F_{B\mid A}^{\hspace{0.05cm}\alpha}$ is jointly measurable if and only if
\begin{equation}\label{eq:yu_criterion} %~\cite{Stano2008,Busch2009,Yu2010}
(\boldsymbol{r}_0\cdot \boldsymbol{r}_{1}-\gamma_0 \gamma_1)^2-\left(1-\mathcal{F}_0^2-\mathcal{F}_1^2\right)\left(1-\frac{\gamma_0^2}{\mathcal{F}_0^2}-\frac{\gamma_1^2}{\mathcal{F}_1^2}\right)\geq 0,
\end{equation}
where
\begin{equation}
\mathcal{F}_a=\frac{1}{2}\left(\sqrt{(1+\gamma_a)^2-\|\boldsymbol{r}_a\|^2}+\sqrt{(1-\gamma_a)^2-\|\boldsymbol{r}_a\|^2}\right).
\end{equation}
A routine calculation shows that for $A=0$ one has 
\begin{subeqnarray*}
\gamma_0&=&\cos^2\left(\frac{\alpha}{2}\right)\cos\theta_{\rm E}, \\
x_0&=&\sin\left(\frac{\alpha}{2}\right) \sin\theta_{\rm E}\sin\left(\frac{\alpha}{2}+\phi_{\rm E}\right) ,\\ 
y_0&=&-\sin\left(\frac{\alpha}{2}\right) \sin\theta_{\rm E}\cos\left(\frac{\alpha}{2}+\phi_{\rm E}\right), \\
z_0&=&\sin^2\left(\frac{\alpha}{2}\right)\cos\theta_{\rm E},
\end{subeqnarray*}
and, similarly, for $A=1$,
\begin{subeqnarray*}
\gamma_1&=&\cos^2\left(\frac{\alpha}{2}\right)\left[\cos\theta_{\rm E}\cos\theta_{\rm S}+\cos(\phi_{\rm S}-\phi_{\rm E})\sin\theta_{\rm E}\sin\theta_{\rm S} \right], \\
x_1&=&\sin \left(\frac{\alpha}{2}\right) \left[\cos\left(\frac{\alpha }{2}\right) (\sin \theta_{\rm E} \cos \theta_{\rm S} \sin \phi_{\rm E}-\cos \theta_{\rm E} \sin \theta_{\rm S} \sin \phi_{\rm S})+\sin \left(\frac{\alpha }{2}\right) \sin \theta_{\rm E} \cos \phi_{\rm E}\right], \\
y_1&=&\sin \left(\frac{\alpha }{2}\right) \left[\cos \left(\frac{\alpha }{2}\right) (\cos \theta_{\rm E} \sin \theta_{\rm S} \cos \phi_{\rm S}-\sin \theta_{\rm E} \cos \theta_{\rm S} \cos \phi_{\rm E})+\sin \left(\frac{\alpha }{2}\right) \sin \theta_{\rm E} \sin \phi_{\rm E}\right], \\ 
z_1&=&\sin\left(\frac{\alpha }{2}\right)\left[\sin\left(\frac{\alpha }{2}\right)\cos \theta_{\rm E}-\cos\left(\frac{\alpha }{2}\right) \sin (\theta_{\rm E}) \sin \theta_{\rm S} \sin (\phi_{\rm S}-\phi_{\rm E})\right].
\end{subeqnarray*}
It is easy to see that $\mathcal{F}_0=\mathcal{F}_1=\cos(\alpha/2)$. Plugging these expressions into Eq. (\ref{eq:yu_criterion}) we get
\begin{align}\label{eq:yu_criterion2}
    &f(\alpha,\theta_{\rm S},\theta_{\rm E},\phi_{\rm S},\phi_{\rm E})\cos\alpha+g(\alpha,\theta_{\rm S},\theta_{\rm E},\phi_{\rm S},\phi_{\rm E})^2\geq 0,
\end{align}
where $g(\alpha,\theta_{\rm S},\theta_{\rm E},\phi_{\rm S},\phi_{\rm E})=\boldsymbol{r}_0\cdot \boldsymbol{r}_{1}-\gamma_0 \gamma_1$ and
\begin{equation*}
f(\alpha,\theta_{\rm S},\theta_{\rm E},\phi_{\rm S},\phi_{\rm E})=1-\cos ^2\left(\frac{\alpha }{2}\right)\left\{\cos^2\theta_{\rm E}+\big[\sin \theta_{\rm E} \sin \theta_{\rm S} \cos (\phi_{\rm S}-\phi_{\rm E})+\cos\theta_{\rm E} \cos \theta_{\rm S}\big]^2\right\}.
\end{equation*}
Consequently, the inequality~\eqref{eq:yu_criterion2} cannot be violated for $0\leq \alpha\leq \pi/2$. 
\end{proof}

This result provides a rigorous explanation for why violations do not occur for values of 
$\alpha$ outside the interval $\pi/2<\alpha<\pi$. Yet, the argument applies only to the case of binary variables $A$ and $B$. Consequently, it remains an open question whether quantum violations might arise outside this interval when one considers non-binary variables.

\section*{D. Description of the experimental setup}
\label{app:setup}

We now briefly describe the experimental setup used in this work. The experiment was performed on the trapped-ion quantum processor reported in Refs.~\cite{Schindler2013,Ringbauer2022}, to which we refer for further technical details.

Fundamentally, we work with a string or chain of $^{40}\text{Ca}^+$ ions trapped in macroscopic linear Paul trap.
We manipulate the qubit states using an addressed \SI{729}{\nano\meter} laser with a linewidth below \SI{1}{\hertz}~\cite{freund_xcorr}, which illuminates the ion chain at an angle of \SI{67.5}{\degree}. This is so that it can still address individual ions while also coupling to the ions' axial motion. Figure~\ref{fig:trap} shows a schematic of the trap, including the propagation directions of the dipole and quadrupole lasers.

Single-ion addressing is achieved using a crossed acousto-optic deflector (AOD) setup~\cite{Ringbauer2022}, which allows us to address all ions in the chain individually. Due to a finite beam size and the angled incidence of this addressing beam, crosstalk between addressed ions still remains one of the biggest sources of infidelity in our setup. To mitigate this, we add a spacer ion in between the ions that implement the qubits we use. 
% \leo{This sounds a bit confusing. Do they allow to apply single qubit gates? What are those frequencies?} \peter{I rewrote it a bit and added two sentences, please check}

We perform state readout using the \SI{397}{\nano\meter} laser by resonantly scattering photons, which is collected with an objective and image onto an EMCCD camera. Photon scattering indicates population in the $\mathrm{S}_{1/2}$ manifold, whereas the absence of fluorescence corresponds to population in the $\mathrm{D}_{5/2}$ manifold. 

% The qubit states are manipulated using an addressed \SI{729}{\nano \meter} laser with a <\SI{1}{\hertz} linewidth~\cite{freund_xcorr}, and entanglement between ions is created with M\o lmer-S\o rensen gates~\cite{molmer_sorensen_1999,molmer_sorensen_2000} using the shared motional modes of the linear ion chain. In this experiment, only the centre-of-mass (COM) mode was used to mediate entanglement. A schematic of the trap with the direction of the dipole and quadrupole laser can be seen in Fig.~\ref{fig:trap}. State readout also utilises the \SI{397}{\nano \meter} laser, scattering light resonantly and collecting it using an objective and focusing it onto an EMCCD camera. Scattering light then corresponds to the ion being in the $\text{S}_{1/2}$ manifold, whereas no light to the $\text{D}_{5/2}$ manifold. Single-ion-addressing is achieved using a crossed-acousto-optical-deflector-setup~\cite{Ringbauer2022,aods_quantum_info}, with which individual ions can be addressed by selecting one or two frequencies which are applied to the AODs. 

In order to temporarily store information during mid-circuit measurement and re-cooling, we use additional levels in both the $\mathrm{S}_{1/2}$ and $\mathrm{D}_{1/2}$ manifolds, thereby enabling the mid-circuit operations necessary to implement our quantum memory certification protocol.

Entangling two-qubit gates in a chain of ions can be generated using Mølmer–Sørensen (MS) gates~\cite{molmer_sorensen_1999}. They rely on the ions' shared motional modes, which, in our experiments, are the centre-of-mass (COM) mode. More specifically, an MS-gate is typically implemented with two bi-chromatic laser pulses illuminating two ions simultaneously. The two frequencies of the bi-chromatic laser beams have to be $ \omega \pm (\nu + \delta)$, where $\omega$ is the qubit transition, $\nu$ the frequency of a motional eigenmode of the ion crystal and $\delta$ an additional detuning that can be used to tune the gate. During the duration of one gate, the motional mode executes a loop in its phase-space and ideally ends up exactly where it started (often the ground state).
Therefore, while perfect ground-state cooling is not required, the fidelity of a gate is still influenced by motional modes' temperature.

Mathematically, a MS-gate on ions $i$ and $j$ takes the form
\begin{equation}\label{eq:MSGATE}
\mathrm{MS}^{ij}_{\phi_i,\phi_j}(\pi/2) = \ee^{-\ii(\pi/2)\,\sigma_{\phi_i}\otimes\sigma_{\phi_j}}
\end{equation}
where $\phi_i$ is the relative phase of the laser pulses implementing the gate to the rest of the sequence they are in. This can be used to determine the axis of rotation of the gate via $\sigma_{\phi}=\cos\phi\,\sigma_x + \sin\phi\,\sigma_y$. Accordingly, the MS-gates can be used to implement both XX and YY interactions.

\begin{figure}\sf
    \begin{minipage}{0.45\linewidth}\flushleft
    \textbf{A.}\vspace{0.25cm}
    \includegraphics[width=1.1\linewidth]{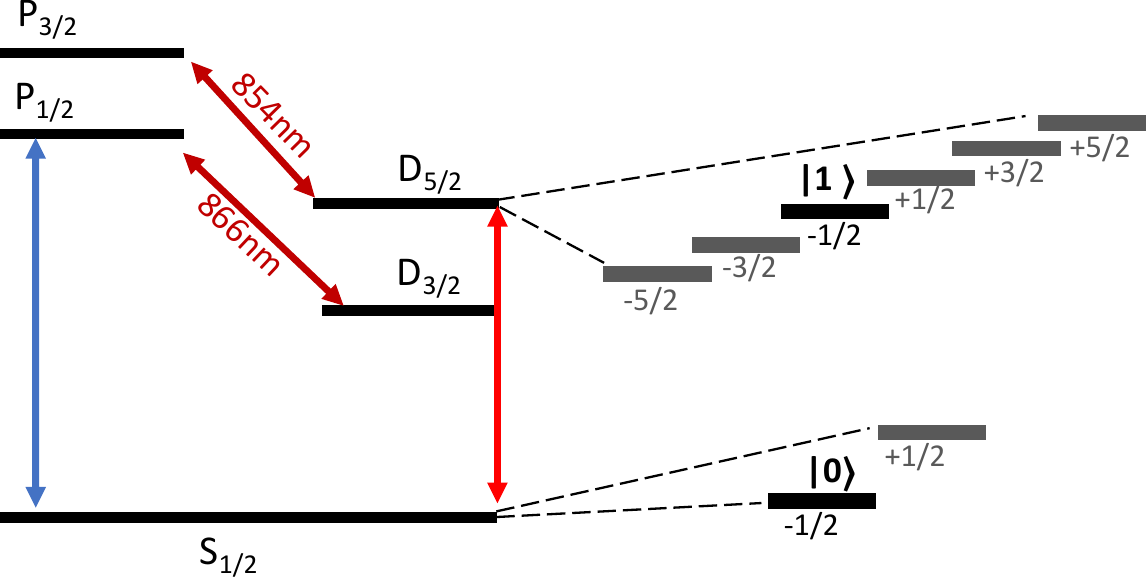}
    \end{minipage}
    \hspace{1cm}
    \begin{minipage}{0.45\linewidth}\flushleft
    \textbf{B.}\vspace{0.25cm}
    \includegraphics[width=1.1\linewidth]{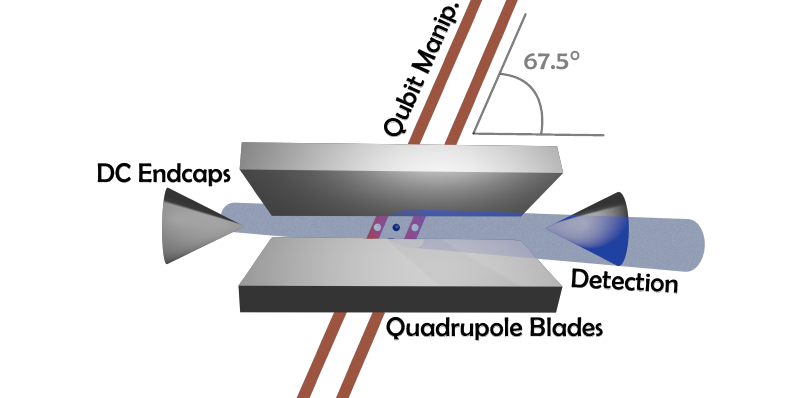}
    \end{minipage}
    \caption{\sf {\bf \textsf{A.}} Schematic of the level structure in the $^{40}\text{Ca}^+$ ion. The energy levels used for qubit states are labelled as $\ket{0}$ and $\ket{1}$. {\bf \textsf{B.}} Schematic of the ion trap and lasers. Both the dipole and the addressed quadrupole laser are coming in at an angle of \SI{67.5}{\degree} to the trap axis.}
    \label{fig:trap}
\end{figure}

\section*{E. Measurement sequences}
\label{app:sequence}

Starting with three ground-state-cooled ions in the trap, we prepare the two-qubit Bell state with a single M{\o}lmer-S{\o}rensen gate that we discussed in the previous section. 
% \peter{Is this reference necessary, the MS gate paragaph ends just the sentence prior? But if you think it's good we can leave it :)} \leo{now refer to the paragraph/equations added previously} 
on the outer ions. Afterwards, we measure qubit $\bf A^\prime$ locally; that is, ideally without influencing other ions. Since $A$ and $B$ are binary, we can consider the alphabet of the random variable $X$ having four distinct values, $\mathcal{X}=\{0,1,2,3\}$. They correspond, respectively, measurements of Pauli observables, $\{\sigma_x,\sigma_z,-\sigma_x,-\sigma_z\}$.

We implement selective mid-circuit measurement by shelving the qubit states of $\bf E$ to unused energy levels available in the $\text{D}_{5/2}$-manifold of the Calcium ion; see panel {\bf\textsf{A}.} of Fig.~\ref{fig:trap}. If $\bf A^\prime$ is found to be in the ground state (here denoted $\ket{0}$), it scatters light during the readout, heating up the ion's shared motional mode. This has to be re-cooled, using Doppler- and Polarisation Gradient Cooling before the circuit can continue. Resolved Sideband Cooling, necessary to reach the motional ground state again, is not possible in this setup without disturbing qubit $\bf E$. The leftover motional excitation constitutes the main contribution to the infidelity of all gates following the mid-circuit measurement.

After the readout, we re-prepare qubit $\bf A^\prime$ depending on the measurement outcome, and label it $\bf A$. Importantly, the decision-making is not implemented in real time, but through post-processing: individual experiment runs are assigned to variable $X$ only if the measurement outcome of $\bf A^\prime$ matches the re-prepared state. The rules for what state to prepare for which measurement outcome differ for the two different experiments, and can be found in Table~\ref{tab:post-processing}. Because of these specific choice of measurements no data has to be discarded, even though this assignment happens in post-processing. In fact, note that both the $\pm\sigma_x$ and $\pm\sigma_z$ are possible measurements and therefore, for every value of $X$ sampled, for every outcome $A$ observed and assigned post-measurement state, there exists a setting in which the sequence is correct. In other words, the association of any observed event $A=a$ to any of the two post-measurement state differ only in the different signs in the corresponding Pauli measurement (e.g., one corresponds to measuring, say $+\sigma_x$, while the other $-\sigma_x$); see Table~\ref{tab:post-processing}.

The readout is then followed by a unitary operation $U$ acting on both qubits, and a final measurement. We performed two different experiments. The first one implements a variable waiting time followed by a measurement aimed at certifying the quantum memory. The second is the partial $\alpha$-{\sc swap} gate, used to test aspects of non-classical causality as discussed above and in the main text. These two different experiments also differ in the state that is re-prepared after the first measurement, and the basis in which the second-time measurement is carried out.

Both experiments we describe suffer from correlated and uncorrelated bit-flip noise on the gates implementing the corresponding sequences. This predominantly stems from the unitary $U$, since, after the mid-circuit measurement, the ion chain cannot be ground-state cooled again without losing information. Accordingly, $U$ is performed on a \emph{heated} ion chain, resulting in a lower gate fidelity. Importantly, as this due to the measurement and not the specific gate implemented, it occurs in both experiments. 
% \leo{please check} \peter{Nice!}

In what follows, we shall describe the two protocols we implement in more detail. The measurements we performed, as well as the corresponding post-measurement states and their convention with the measurement outcomes, are summarised in the Table~\ref{tab:post-processing}.

\begin{table}[]
\centering
{\setlength{\tabcolsep}{0pt}
\begin{tabular}{ccc}
\hline
\rowcolor[HTML]{F5F5F5}
\quad\sf \textbf{Meas. basis} ($X=x$) \quad\quad & \quad\quad\sf \textbf{Outcome} ($A=a$) \quad\quad & \quad\quad\sf \textbf{Post-meas. state} $\rho_{A=a}$ \quad\quad \\ \hline
\rowcolor[HTML]{FFFFFF}
$\sigma_x$ ($X=2$) & $A=0$ & \sf\textcolor{midnightblue}{$\ket{+}$} \quad or \quad \textcolor{darkred}{$\ket{+\ii}$} \\
\rowcolor[HTML]{FFFFFF}
$\sigma_x$ ($X=0$) & $A=0$ & \sf\textcolor{midnightblue}{$\ket{-}$} \quad or \quad \textcolor{darkred}{$\ket{-\ii}$}\\
\rowcolor[HTML]{FFFFFF}
$\sigma_x$ ($X=0$) & $A=1$ & \sf\textcolor{midnightblue}{$\ket{+}$} \quad or \quad \textcolor{darkred}{$\ket{+\ii}$}\\
\rowcolor[HTML]{FFFFFF}
$\sigma_x$ ($X=2$) & $A=1$ & \sf\textcolor{midnightblue}{$\ket{-}$} \quad or \quad \textcolor{darkred}{$\ket{-\ii}$}\\
\rowcolor[HTML]{FFFFFF}
$\sigma_z$ ($X=3$) & $A=0$ & \sf\textcolor{midnightblue}{$\ket{+}$} \quad or \quad \textcolor{darkred}{$\ket{+\ii}$}\\
\rowcolor[HTML]{FFFFFF}
$\sigma_z$ ($X=1$) & $A=0$ & \sf\textcolor{midnightblue}{$\ket{-}$} \quad or \quad \textcolor{darkred}{$\ket{-\ii}$}\\
\rowcolor[HTML]{FFFFFF}
$\sigma_z$ ($X=1$) & $A=1$ & \sf\textcolor{midnightblue}{$\ket{+}$} \quad or \quad \textcolor{darkred}{$\ket{+\ii}$}\\
\rowcolor[HTML]{FFFFFF}
$\sigma_z$ ($X=3$) & $A=1$ & \sf\textcolor{midnightblue}{$\ket{-}$} \quad or \quad \textcolor{darkred}{$\ket{-\ii}$}\\ \hline
\end{tabular}}
\caption{\sf Post-processing assignment for the measurement performed at the first time. This table applies to both experiments, referred to as the “quantum memory test” and the “partial swap protocol.” The measurement basis is identical in the two experiments; they differ only in the assigned post-measurement states. In the quantum memory test these are \textcolor{midnightblue}{$\ket{\pm}$}, while in the partial swap protocol they are \textcolor{darkred}{$\ket{\pm \ii}$}.}
    \label{tab:post-processing}
\end{table}

\subsection{Quantum memory test}

In the quantum memory test we introduced a waiting time after the first measurement, followed by a {\sc cnot} and a full swap two-qubit gate between $\bf B$ and $\bf E$. For an ideal quantum memory, it theoretically produces the maximal quantum violation of inequality~\eqref{eq:KM-SM}. Yet, the experimentally measured violation decays with increasing waiting time, indicating a slow loss of quantum coherence of the information stored in the two systems. These data were presented in the panel {\bf\textsc{B.}} of Fig.~\ref{fig:experiment} in the main text; below, we explain how we implement the corresponding gates and measurements, including imperfections.

% The data, including a prediction of the memory decay from our quantum processor benchmarks, was shown in panel {\bf\textsc{B.}} of Fig.~\ref{fig:experiment} in the main text

First, the state re-preparation works as follows. If the measurement outcome is, say $A=1$, $\ket{+}$ is prepared, and for $A=0$ we prepare $\ket{-}$ (eigenvectors of $\sigma_x$). Recall that this does \emph{not} depend on the measurement basis, that is, the variable $X$. 

The {\sc cnot} gate is implemented as a combination of one MS-gate [Eq.~\eqref{eq:MSGATE}] and local rotations on control ($\bf C$) and target ($\bf T$) qubit in the form of 
\begin{equation}
U_{\textsc{cnot}} = R^{\bf C}_{0°}(\pi/2)\cdot R^{\bf T}_{\pi°}(\pi/2) \cdot R^{\bf T}_{\pi/4}(\pi) \cdot \mathrm{MS}^{\bf C,\,T}_{XX}(\pi/2)\cdot R^{\bf T}_{\pi/2}(\pi/2).
\end{equation}
Here, $R^{i}_{\phi}(\theta)$ is the rotation on the single-qubit Bloch sphere of qubit $i\in\{\mathbf{C},\,\mathbf{T}\}$ for a rotation angle of $\theta$ around the axis that lies on the equator and is rotated by $\phi$ with respect to the x-axis in the Bloch sphere. 

% The MS-gate in this case takes the form of $MS^{\bf C T}_{XX}(\pi/2) = e^{-i\frac{\pi}{2} \sigma^{\bf C}_X\otimes \sigma^{\bf T}_X}$. \leo{I think it is not necessary to change this as I said before; I just misread it. Yet, it is not defined which qubit is control and which one is target. I mean, replacing $0$ and $1$ by $\bf C$ and $\bf T$ makes it explicit and unambiguous.} 

We implement the {\sc swap} gate virtually; that is, instead of physically swapping the positions of ions in the chain or jointly addressing them in order to implement the unitary {\sc swap}, we simply change their roles. This is much more practical, and, from a theoretical standpoint, nothing changes since we explicitly treat qubits $\bf A^\prime$, $\bf A$ and $\bf B$ as potentially different physical systems all the times. 

% \peter{sounds good!}.
% \leo{I moved the part of the heating to the top, as it refers to both experiments not to this specific one.}

In order to compensate coherent phase noise caused by environment magnetic field oscillations, we use a dynamical decoupling scheme with echo pulses at intervals of \SI{2.5}{\milli \second}. This leads to the decay being dominated by incoherent dephasing, with a coherence time of $\text{T}_2 \approx \SI{364}{\milli \second}$. An additional loss of information occurs due to the finite infidelity of the pulses performing the echo sequence, which was characterised to be \SI{0.005}{}. The spontaneous decay of the qubit excitation $\text{T}_1 = \SI{1.17}{ \second}$  plays a secondary role at the \SI{}{ms}-timescales for which we can certify memory. A longer discussion of the dynamical decoupling scheme used is in Section F of this Supplementary information. 

Finally, recall that the first measurement heats up the setup, which is the main reason why we observe an initial offset from the maximal violation (see panel {\bf\textsf{B.}} Fig.~\ref{fig:experiment}). 

% \cite{steane_lifetime}

% \leo{Please check }

\subsection{Partial swap protocol}

% Similar to the classical case, also in a quantum mechanical formalism causal relations can be described as being of either direct-cause (DC) or common-cause (CC) nature. Local operations on qubit $\bf A$ in between the first and the second measurement can be understood as affecting solely the direct-cause influence of the first measurement on the second. A common-cause component can be added by e.g. swapping the qubits $\bf A$ and $\bf E$ in between the measurements. 
% A more general relation consisting of a mixture of DC and CC can be implemented using a partial swap gate $\mathbf{U}_{swap}(\alpha) = \text{cos}(\alpha/2) \mathbb{I} + i \text{sin}(\alpha/2) \mathbf{U}_{swap}$, where $\mathbf{U}_{swap}$ just exchanges the states of systems $\mathbf{A^\prime}$ and $\mathbf{B}$.
% The swap angle $\alpha$ then quantifies the CC and DC component in the relationship between the two temporally separated measurements. This is expected to only result in a violation of the classical causality constraints in Equation~\eqref{eq:km} in the main text if the swap angle lies in a regime between $\pi /2$ and $\pi$. Importantly, having only DC behaviour (for swap angle $\alpha=0$) results in no violation of classicality, while only CC (swap angle $\alpha  =\pi$) lies at the classical limit. \leo{Already discussed, so I think should be removed.}\leo{If no opposition, please delete this part.}

% \leo{I rewrote this part according to what we discuss; please check}\peter{I think it's good}

In the test of quantum causal relations, we implemented the $\alpha$-partial {\sc swap} gate
\begin{equation}
U_{\rm PS}(\alpha)= \cos(\alpha/2) \mathrm{id} + \ii\sin(\alpha/2) U_{\textsc{swap}}.
\end{equation}
In this experiment, not only the unitary $U$ is changed but also the post-measurement states and the measurement performed at the second time, in order to observe the maximum possible violation for the corresponding process (see Section~C for the theoretical analysis). Now, instead of the eigenstates of $\sigma_x$, the system $\bf A$ is prepared in one of the eigenstates of $\sigma_y$, denoted $\ket{\pm \ii}$, namely $\ket{+\ii}$ if $A=0$ and $\ket{-\ii}$ otherwise. The second time measurement is now the X-Pauli, $\sigma_x$.

To implement the $\alpha$-partial {\sc swap} gate in the trapped ion processor, we use the following combination of MS-gates
\begin{equation}
U_{\rm PS}(\alpha)=\mathrm{MS}_{\rm ZZ}(-\alpha/2) \cdot \mathrm{MS}_{\rm YY}(-\alpha/2) \cdot \mathrm{MS}_{\rm XX}(-\alpha/2).
\end{equation}
These three MS-gates used above differ only from a local rotation; namely, the $\mathrm{MS}_{\rm XX}$ and $\mathrm{MS}_{\rm YY}$ differ only in a $\pi/2$ phase shift, and $\mathrm{MS}_{\rm ZZ}$ can be implemented using $\mathrm{MS}_{\rm XX}$ and two Hadamard gates that diagonalise the gate in the computational basis. The angle by which the MS gates rotate can be controlled with the laser intensity, while gate time itself is kept fixed.

The expected theoretical behaviour is given by the $\Gamma_\alpha$ function in Eq.~\eqref{eq:gammaalpha}; see also Fig.~\ref{fig:CC-DC}. The experimentally observed value are shown in the panel {\bf\textsf{C.}} of Fig.~\ref{fig:experiment} (up to rescaling when compared to Fig.~\ref{fig:CC-DC}). There, one can clearly see that the measured violations match the sinusoidal shape and phase of the theoretical prediction.

\section*{F. Echo sequence and dephasing}
\label{app:dephasing}

% In many laboratory environments, magnetic field noise is not just incoherent noise stemming from an unidentifiable environment, but often includes coherent noise at certain frequencies. Particularly hard to control are noise components at the frequency of the of the mains electricity supply, or multiples thereof. These noise components, which in Europe are \SI{50}{\hertz} and multiples, cannot always be removed, and are therefore dealt with it using active field compensation. This way we are able to remove coherent noise components at \SI{50}{\hertz}, but not at \SI{100}{\hertz}.
% These can in turn be mitigated using dynamical decoupling. We use a  variable number of echo pulses after each \SI{2.5}{\milli \second} of wait time. The corresponding filter function minimizes the sensitivity to coherent noise at \SI{100}{\hertz} as well as DC. This comes at the cost of being sensitive to \SI{50}{\hertz}, but since they are already suppressed by the active field compensation, this has no problematic effect. 

In this section we discuss the noises behind the classicalisation of our quantum memory we observe when introducing the waiting time.

\begin{figure}
    \centering
    \includegraphics[width=0.75\linewidth]{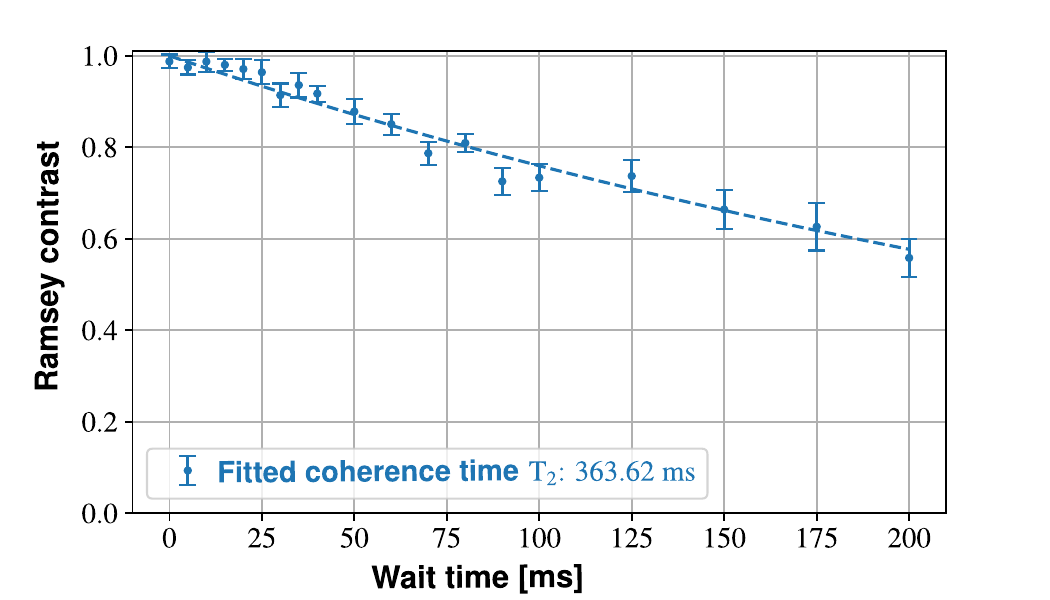}
    \caption{\sf \textbf{Coherence time measurement.} Measured dephasing time or effective coherence time $\mathrm{T}_2$ of the trapped ion processor. Here, ``Ramsey contrast'' refers to the result of a standard Ramsey experiment. It informs how much coherence is left. }
    \label{fig:ramsey}
\end{figure}

First, recall that, in many laboratory environments, magnetic-field noise is not purely incoherent noise arising from an uncharacterised environment, but often it also contains \emph{coherent} components at well-defined frequencies. Noise at the frequency of the mains electricity supply, and at its harmonics, is particularly difficult to control. In Europe, these components occur at \SI{50}{\hertz} and its multiples. Such noise cannot always be fully eliminated; instead, they can be addressed using active magnetic-field compensation. By doing so, we were able to suppress coherent noise at \SI{50}{\hertz}, but, unfortunately, we could not do the same for \SI{100}{\hertz}. 

To address this issue, we mitigate the remaining \SI{100}{\hertz} component using dynamical decoupling~\cite{lorenza_dd}. Specifically, we apply a variable number of echo pulses after each \SI{2.5}{\milli\second} waiting interval. The resulting filter function minimises sensitivity to coherent noise at \SI{100}{\hertz} as well as to quasi-static (DC) noise. Although this procedure increases sensitivity at \SI{50}{\hertz}, this does not pose a problem because those components are already suppressed by the active field compensation.

Finally, in order to simulate the expected decay rate, we measure the effective dephasing time of the dynamical decoupling sequence as well as the fidelity of the echo pulses used. 
% \leo{Would not it be dephasing time with dynamical decoupling?} \peter{I get a different dephasing time if I apply the DD sequence or not, which is why I write "effective dephasing time of the dynamical decoupling sequence". This is the relevant one, since we also apply the DD sequence during our memory certification. Is it kind of understandable what I mean?} 
Since always the exact same echo pulse is used in the gate sequence, calibration errors accumulate, yet we obtained an average fidelity of \SI{0.995}{} per echo. 

We measured the dephasing time with a set of Ramsey experiments (see e.g.~Ref.~\cite{Schindler2013}) over variable waiting time using an exponential fit. Of course, mathematically different models could be fitted to the data, yet the exponential fit is suitable for at least two reasons. First, the fit corresponds to frequency noise distributed like a Lorentzian around the centre frequency, which has a clear physical meaning and, second, as shown in Fig.~\ref{fig:ramsey}, it fits the observed data with great precision (specially in comparison with other possible models like Gaussian). With this data at hand, we simulate the expected decay of our quantum memory in Fig.~\ref{fig:experiment} of the main text. The agreement between our theory supplied with simple noise model and the observed decay shows the efficiency of our method to detect memory classicalisation.

\section*{F. Estimating cross-talk influences}
\label{app:crosstalk}

% \leo{I have added some text bellow; please check}

Finally, we discuss experimental inaccuracies that can lead to cross-talk influences. Since the absence of such causal influence is a central hypothesis which cannot be justified upon strong physical assumptions such as finite velocity of physical interactions (Einstein's relativity) which is often invoked in Bell's tests. In fact, as stressed in the main text, this assumption must be rigorously tested, which, as we have seen, can be done using ACDE and Pearl's inequality [Eq.~\eqref{eq:ACDE}~and~\eqref{eq:Pearl-ineq-SM}]. We will therefore present the main physical reasons behind possible imperfections in our setup and provide some data that are used to estimate such detrimental causal influence.

\begin{figure}
    \includegraphics[width=0.7\linewidth]{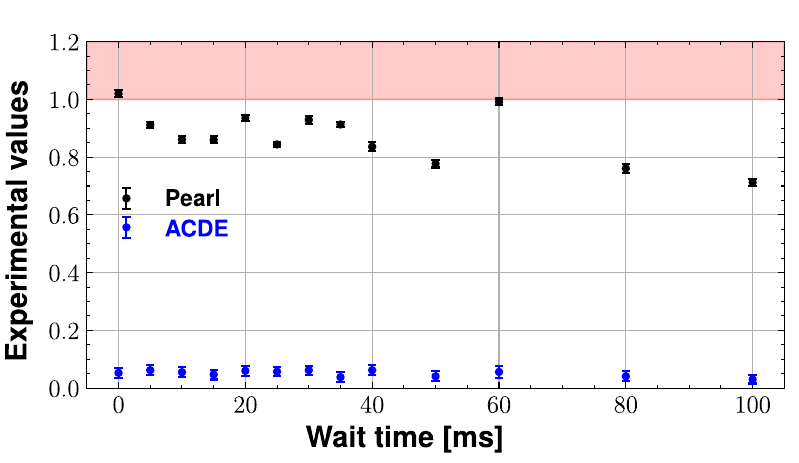}
    \caption{\sf \textbf{Pearl's inequality and the ACDE values over waiting time.} The threshold at which indicates cross-talk influences is given by the red-shaded area, corresponding to $\text{I}_{Pearl}>1$. The ACDE-values can be used to test Eq.~\eqref{eq:kmct} from the main text, and are also found to present no problem for the certification of our quantum memory.  }
    \label{fig:I_pearl}
\end{figure}

In our experiment, various error channels connecting the two systems $\bf A$ and $\bf E$ can add new causal influences on the measurements undertaken, opening a loophole in our certification scheme. On a physical layer, the main cause for remaining crosstalk-influences is what is called crosstalk of the laser beams addressing the individual ions. Since the ions are only spaced several \SI{}{\micro \meter} from each other, laser beams have to be very tightly focussed to only address one ion at a time. Still, its finite size, limited by its wavelength, as well as imperfections in the beam shape (e.g. Airy disks) result in a residual intensity of light at the ions supposed to be left dark. The tilt with which we address the ion chain (67.5 \degree) amplifies this effect, since it increases the effective beam cross-section in the ion plain by nearly \SI{10}{\%}. Crosstalk is strongly mitigated by using a spacer ion in between the two implementing the protocol, pushing the logical error per gate below \SI{e-3}{}. Still, we have to consider the effects it has on our certification scheme, whence the choice for ``cross-talk influences'' in our theoretical analysis.
% \peter{This is the crosstalk paragraph, please check}

% \leo{As discussed, we need a paragraph explaining physical cross-talk between ions.}\leo{Just as a note, the influence $X\to B$ does not invalidate DI but opens a loophole.}

At this point, some comments on device independence and associated loopholes are in order. First, as discussed in Section~A, Pearl’s inequality is formulated in terms of observational probabilities $\Pr(AB\mid X)$ and provides a \emph{sufficient} condition for the presence of cross-talk influences; that is, if the inequality is not violated, the existence of crosstalk cannot be ruled out. The ACDE, by contrast, yields a direct estimate of cross-talk influences, but it relies on interventional probabilities $\Pr(B_{\dop(A,\,X)})$. 

In the specific implementation of the experiment (see Section~E of this Supplementary Information), the same empirical probability table can be used to estimate both quantities. However, we emphasise that this does \emph{not} close the corresponding loophole, since doing so breaks device-independence by attributing two distinct causal structures, related through intervention, to the same data set. Accordingly, the measured data should be understood as \emph{addressing}, rather than \emph{closing}, the aforementioned loophole.

% In particular, we need to bound the causal influence of the instrumental variable $\bf X$ on the measurement of $\bf B$, not mediated by $\bf A$. One advantage of our protocol is that this can be done with the same data as the certification itself, using Pearl's inequality in the form of Eq. ~\eqref{eq:Pearl-ineq-SM} or the ACDE introduced in Eq.~\eqref{eq:ACDE} of the main text.
% The value of this inequality for the different wait times can be seen in Fig.~\ref{fig:I_pearl}. Both an $I_{Pearl}$-value close but below 1 and a $ACDE$-value close to 0 bound the effect of crosstalk influence. In the latter case,  Equation ~\eqref{eq:kmct} from the main text provides a bound on the effect of possible cross-causalities. Evaluating it with the ACDE values measured, they are insufficient to explain the measured violations up to wait-times of \SI{30}{\milli \second}.

For the quantum memory test, we estimated both Pearl's inequality and ACDE. The results with increasing waiting time are plotted in Fig.~\ref{fig:I_pearl}. For the case of an ideal quantum memory, the value we expect for Pearl's inequality to be
\begin{equation}
\frac{2+\sqrt{2}}{4}\approx 0.853.
\end{equation}
This value is very close to the optimally measured, $\approx 0.883$ [Eq.~(8) in the main text]. Its behaviour while increasing the waiting time in shown in Fig.~\ref{fig:I_pearl}, where one can also find the corresponding values for ACDE.

\end{document}